\theoremstyle{plain}
\newtheorem{theorem}{Theorem}
\theoremstyle{definition}
\newtheorem{example}[theorem]{Example}
\newtheorem{remark}[theorem]{Remark}
\DeclarePairedDelimiter{\ceil}{\lceil}{\rceil}
\DeclarePairedDelimiter\floor{\lfloor}{\rfloor}
\def \y {{\eta}}
\def \s {{\sigma}}
\def \d {{\delta}}
\def \g {{\gamma}}
\def \a {{\alpha}}
\def \xbar {\bar{x}}
\def \R  {{\mathbb {R}}}
\def \x {{\xi}}
\def \g {{\gamma}}
\def \e {{\varepsilon}}
\def \t {{\tau}}
\renewcommand{\]}{\right]}
\newcommand\N{\mathbb{N}}
\def \caratt {{\mathds{1}}}
\newcommand\Eb{E}
\newcommand\Rb{\mathbb{R}}
\newcommand\G{\Gamma}
\newcommand\Gh{\hat{\Gamma}}
\newcommand\gam{\gamma}
\def \phi {{\varphi}}
\begin{document}

\title{Pricing Bermudan options under local L\'evy models with default}

\author{Anastasia Borovykh\thanks{Dipartimento di Matematica, Universit\`a di Bologna, Bologna, Italy.
\textbf{e-mail}: anastasia.borovykh2@unibo.it} \and Andrea Pascucci\thanks{Dipartimento di Matematica, Universit\`a di
Bologna, Bologna, Italy. \textbf{e-mail}: andrea.pascucci@unibo.it} \and Cornelis W. Oosterlee\thanks{Centrum Wiskunde \& Informatica, Amsterdam, The Netherlands. \textbf{e-mail}: c.w.oosterlee@cwi.nl} \thanks{Delft University of Technology, Delft, The Netherlands.}}

\date{This version: \today}

\maketitle

\begin{abstract}
We consider a defaultable asset whose risk-neutral pricing dynamics are described by an
exponential L\'evy-type martingale. This class of models allows for a local
volatility, local default intensity and a locally dependent L\'evy measure. We present a pricing
method for Bermudan options based on an analytical approximation of the characteristic function
combined with the COS method. Due to a special form of the obtained characteristic function the price can
be computed using a Fast Fourier Transform-based algorithm resulting in a fast and accurate
calculation. The Greeks can be computed at almost no additional computational cost. Error bounds for the approximation of the characteristic function as well as for the
total option price are given.
\end{abstract}

\noindent \textbf{Keywords}:  {Bermudan option, local L\'evy model, defaultable asset, asymptotic
expansion, Fourier-cosine expansion}

%
%

\section{Introduction}

In financial mathematics, the fast and accurate pricing of financial derivatives is an important
branch of research. Depending on the type of financial derivative, the mathematical task is
essentially the computation of integrals, and this sometimes needs to be performed in a recursive
way in a time-wise direction. For many stochastic processes that model the financial
assets, these integrals can be most efficiently computed in the Fourier domain. However, for some
relevant and recent stochastic models the Fourier domain computations are not at all
straightforward, as these computations rely on the availability of the characteristic function of
the stochastic process (read: the Fourier transform of the transitional probability distribution),
which is not known. This is especially true for state-dependent asset price processes, and for
asset processes that include the notion of default in their definition. With the derivations and
techniques in the present paper we make available the highly efficient pricing of so-called
Bermudan options to the above mentioned classes of state-dependent asset dynamics, including jumps
in asset prices and the possibility of default. In this sense, the class of asset models for which
Fourier option pricing is highly efficient increases by the contents of the present paper.
Essentially, we approximate the characteristic function by an advanced Taylor-based expansion in
such a way that the resulting characteristic function exhibits favorable properties for the
pricing methods.

Fourier methods have often been among the winners in option pricing competitions such as BENCHOP \cite{benchop}. In \cite{FangO09}, a Fourier method called the COS method, as introduced in \cite{FangO08}, was extended to the pricing of Bermudan options. The
computational efficiency of the method was based on a specific structure of the characteristic
function allowing to use the fast Fourier transform (FFT) for calculating the continuation value of the option. Fourier
methods can readily be applied to solving problems under asset price dynamics for which the
characteristic function is
available. This is the case for exponential L\'evy models, such as the Merton model developed in
\cite{merton1976option}, the Variance-Gamma model developed in \cite{vg}, but also for the Heston
model \cite{heston1993}. However, in the case of local volatility, default and state-dependent jump
measures there is no closed form characteristic function available and the COS method can not be
readily applied.

Recently, in \cite{pascucci-riga} the so-called {\it adjoint expansion method} for the approximation of
the characteristic function in local L\'evy models is presented. This method is worked out in the
Fourier space by considering the adjoint formulation of the pricing problem, that is using a
backward parametrix expansion as was also later done in \cite{BallyKohatsu-Higa}. In this paper we
generalize this method to include a defaultable asset whose risk-neutral pricing dynamics are
described by an exponential L\'evy-type martingale with a state-dependent jump measure, as has
also been considered in \cite{LorigPP2015} and in \cite{JacquierLorig2013}. 

Having obtained the
analytical approximation for the characteristic function we combine this with the COS method for
Bermudan options. We show that this analytical formula for the characteristic function still
possesses a structure that allows the use of a FFT-based method in order to calculate the
continuation value. This results in an efficient and accurate computation of the Bermudan option
value and of the Greeks. The
characteristic function approximation used in the COS method is already very accurate for the
$2$nd-order approximation, meaning that the explicit formulas are simple and this makes method
easy and quick to implement. Finally, we present a theoretical justification of the accurate
performance of the method by giving the error bounds for the approximated characteristic function.

The rest of this paper is organized as follows. In Section \ref{section1} we present the general
framework which includes a local default intensity, a state-dependent jump measure and a local
volatility function. Then we derive the adjoint expansion of the characteristic function. In
Section \ref{section2} we propose an efficient algorithm for calculating the Bermudan option
value, which makes use of the Fast Fourier transform. In Section \ref{section3} we prove error
bounds for the $0$th- and $1$st-order approximation, justifying the accuracy of the method.
Finally, in Section \ref{section4} numerical examples are presented, showing the flexibility,
accuracy and speed of the method.

\section{General framework} \label{section1}
We consider a defaultable asset $S$ whose risk-neutral dynamics are given by:
 \begin{align}
 S_t &= \caratt_{\{t<\zeta\}}e^{X_t},\nonumber\\
 dX_t &= \mu (t,X_t)dt+\sigma (t,X_t)dW_t+\int_\mathbb{R}d\tilde   N_t(t,X_{t-},dz)z,\nonumber \\
d\tilde N_t(t,X_{t-},dz) &= dN_t(t,X_{t-},dz)-\nu (t,X_{t-},dz)dt,\nonumber\\
 \zeta &= \inf\{t\geq 0 :\int_0^t\gamma(s,X_{s})ds\geq \varepsilon\}\label{eq:hetmodel},
 \end{align}
where $\tilde N_t(t,x,dz)$ is a compensated 
random measure with state-dependent L\'evy measure $\nu(t,x,dz)$. The default time $\zeta$ of $S$
is defined in a canonical way as the first arrival time of a doubly stochastic Poisson process
with local intensity function $\gamma(t,x)\geq 0$, and $\varepsilon \sim \mathrm{Exp}(1)$ and is
independent of $X$. Thus the model features:
\begin{itemize}
\item a local volatility function $\sigma(t,x)$;
\item a local L\'evy measure: jumps in $X$ arrive with a state-dependent intensity
described by the local L\'evy measure $\nu(t,x,dz)$. The jump intensity and jump distribution can
thus change depending on the value of $x$. A state-dependent L\'evy measure is an important
feature because it allows to incorporate stochastic jump-intensity into the modeling framework;
\item a local default intensity $\g(t,x)$: the asset $S$ can default with a state-dependent default intensity.
\end{itemize}
This way of modeling default is also considered in a diffusive setting in \cite{JDCEV} and for
exponential L\'evy models in \cite{capponi}.

We define the filtration of the market observer to be $\mathcal{G}=\mathcal{F}^X\vee
\mathcal{F}^D$, where $\mathcal{F}^X$ is the filtration generated by $X$ and
$\mathcal{F}_t^D:=\sigma(\{\zeta\leq u\},u\leq t)$, for $t\ge0$, is the filtration of the default.
We assume
\begin{equation}\label{nusomm}
  \int_\mathbb{R}e^{|z|}\nu(t,x,dz)<\infty,
\end{equation}
and by imposing that the discounted asset price $\tilde S_t := e^{-rt}S_{t}$ is a
$\mathcal{G}$-martingale, we get the following restriction on the drift coefficient:
 $$\mu(t,x) = \gamma(t,x)+r-\frac{\sigma^2(t,x)}{2}-\int_\mathbb{R}\nu(t,x,dz)(e^z-1-z).$$
Is it well-known (see, for instance, \cite[Section 2.2]{linetsky2006bankruptcy}) that the price
$V$ of a European option with maturity $T$ and payoff $\Phi(S_{T})$  is given by
\begin{align}\label{e1}
 V_{t} = \caratt_{\{\zeta>t\}} e^{-r(T-t)}\Eb \[e^{-\int_t^T \gam(s,X_s) ds}\phi(X_T)  | X_t \],\qquad t\le T,
\end{align}
where $\phi(x)=\Phi(e^{x})$.
Thus, in order to compute the price of an option, we must evaluate functions of the form
\begin{align}\label{expectation}
 u(t,x):= \Eb \[e^{-\int_t^T \gam(s,X_s) ds}\phi(X_T)| X_t = x \] .
\end{align}
Under standard assumptions, $u$ can be expressed
as the classical solution of the following Cauchy problem
\begin{align}\label{eq:v.pide}
&\begin{cases}
 L u(t,x)=0,\qquad & t\in[0,T[,\ x\in\mathbb{R}, \\
 u(T,x) =  \phi(x),& x \in\mathbb{R},
\end{cases}
\end{align}
where $L$ is the integro-differential operator
 \begin{align}\label{opL}
 L u(t,x) =&\ \partial_t u(t,x)+r\partial_xu(t,x) +\gamma(t,x)(\partial_x u(t,x)-u(t,x))+\frac{\sigma^2(t,x)}{2}(\partial_{xx}-\partial_x)u(t,x)\nonumber\\
 &-\int_\mathbb{R}\nu(t,x,dz)(e^z-1-z)\partial_x u(t,x)+\int_\mathbb{R}\nu(t,x,dz)(u(t,x+z)-u(t,x)-z\partial_x u(t,x)).
 \end{align}
The function $u$ in \eqref{expectation} can be represented as an integral with respect to the
transition distribution of the defaultable log-price process $\log S$:
\begin{align}\label{eq:v.def1}
 u(t,x) = \int_\Rb  \phi(y)\G(t,x;T,dy).
\end{align}
Here we notice explicitly that $\G(t,x;T,dy)$ is not necessarily a standard probability measure
because its integral over $\R$ can be strictly less than one; nevertheless, with a slight abuse of
notation, we say that its Fourier transform
 $$\Gh(t,x;T,\x):=\mathcal{F}(\Gamma(t,x;T,\cdot))(\xi):= \int_\mathbb{R}e^{i\xi y}\Gamma(t,x;T,dy),\qquad \x\in\R,$$
is the characteristic function of $\log S$.

\subsection{Adjoint expansion of the characteristic function}
In this section we generalize the results in \cite{pascucci-riga} to our framework and
develop an expansion of the coefficients
 $$a(t,x):=\frac{\sigma^2(t,x)}{2},\qquad \gamma(t,x),\qquad \nu(t,x,dz),$$
around some point $\bar{x}$. The coefficients $a(t,x)$, $\gamma(t,x)$ and $\nu(t,x,dz)$ are
assumed to be continuously differentiable with respect to $x$ up to order $N\in\mathbb{N}$.

From now on for simplicity we assume that the coefficients are independent of $t$ (see Remark
\ref{r1} for the general case). First we introduce the $n$th-order approximation of $L$ in
\eqref{opL}:
\begin{align}
 L_n =&\ L_0 + \sum_{k=1}^n\Big( (x-\bar x)^k a_k (\partial_{xx}-\partial_x)+(x-\bar x)^k\gamma_k\partial_x-(x-\bar x)^k\gamma_k\nonumber\\
 & -\int_\mathbb{R}(x-\bar x)^k\nu_k(dz)(e^z-1-z)\partial_x +\int_\mathbb{R}(x-\bar x)^k\nu_k(dz)(e^{z\partial_x}-1-z\partial_x)\Big),
 \end{align}
where
 \begin{align}
 L_0 &=  \partial_t +r\partial_x+ a_0 (\partial_{xx}-\partial_x)+ \gamma_0\partial_x-\gamma_0
  -\int_\mathbb{R}\nu_0(dz)(e^z-1-z)\partial_x
  +\int_\mathbb{R}\nu_0(dz)(e^{z\partial_x}-1-z\partial_x),
 \end{align}
and
\begin{align}
  a_k= \frac{\partial_x^k a(\bar x)}{k!},\qquad
  \gamma_k = \frac{\partial_x^k \gamma(\bar x)}{k!},\qquad
  \nu_k(dz) = \frac{\partial_x^k \nu (\bar x,dz)}{k!},\qquad\ k\ge 0.
 \end{align}
The basepoint $\bar x$ is a constant parameter which can be chosen freely. In general the simplest
choice is $\bar x = x$ (the value of the underlying at initial time $t$): we will see that in this
case the formulas for the Bermudan option valuation are simplified.

Let us assume for a moment that $L_{0}$ has a fundamental solution $G^{0}(t,x;T,y)$ that
is defined as the solution of the Cauchy problem
 $$\begin{cases}
 L_0 G^{0}(t,x;T,y) =0\qquad & t\in[0,T[,\ x\in\mathbb{R}, \\
 G^{0}(T,\cdot;T,y) =\d_{y}. 
 \end{cases}$$
In this case we define the $n$th-order approximation of $\Gamma$ as
 $$\Gamma^{(n)}(t,x;T,y) = \sum_{k=0}^n G^{k}(t,x;T,y),$$
where, for any $k\ge 1$ and $(T,y)$, $G^{k}(\cdot,\cdot;T,y)$ is defined recursively through the
following Cauchy problem
 $$
  \begin{cases}
 L_0 G^{k}(t,x;T,y) = -\sum\limits_{h=1}^k(L_h-L_{h-1})G^{k-h}(t,x;T,y)\qquad & t\in[0,T[,\ x\in\mathbb{R}, \\
 G^{k}(T,x;T,y) =0,& x \in\mathbb{R}.
 \end{cases}
 $$
Notice that
 \begin{align}
 L_h-L_{h-1}=&\ (x-\bar x)^h a_h (\partial_{xx}-\partial_x)+(x-\bar x)^h\gamma_h\partial_x-(x-\bar x)^h\gamma_h\\
 & -\int_\mathbb{R}(x-\bar x)^h\nu_h(dz)(e^z-1-z)\partial_x +\int_\mathbb{R}(x-\bar x)^h\nu_h(dz)(e^{z\partial_x}-1-z\partial_x).
 \end{align}
Correspondingly, the $n$th-order approximation of the characteristic function $\Gh$ is defined to
be
\begin{equation}\label{adapprox}
 \Gh^{(n)}(t,x;T,\xi)=\sum_{k=0}^n \mathcal{F}\left(G^{k}(t,x;T,\cdot)\right)(\xi):=\sum_{k=0}^n \hat G^{k}(t,x;T,\xi),\qquad \x\in\R.
\end{equation}
Now we remark that the operator $L$ acts on $(t,x)$ while the characteristic function is a Fourier
transform taken with respect to $y$: in order to take advantage of such a transformation, in the
following theorem we characterize $\Gh^{(n)}$ in terms of the Fourier transform of the adjoint
operator $\tilde L=\tilde L^{(T,y)}$ of $L$, acting on $(T,y)$.
\begin{theorem}[Dual formulation] For any $(t,x)\in ]0,T]\times\mathbb{R}$, the function $G^{0}(t,x;\cdot, \cdot)$ is defined through the following dual
Cauchy problem
\begin{equation}\label{eq:cauchy1}
  \begin{cases}
 \tilde L_0^{(T,y)} G^{0}(t,x;T,y) =0\qquad & T>t,\ y\in\mathbb{R}, \\
 G^{0}(T,x;T,\cdot) =\d_{x}. 
 \end{cases}
\end{equation}
where
\begin{align}
 \tilde L_0^{(T,y)}&=  -\partial_T -r\partial_y+ a_0 (\partial_{yy}+\partial_y)- \gamma_0\partial_y-\gamma_0
 +\int_\mathbb{R}\nu_0(dz)(e^z-1-z)\partial_y +\int_\mathbb{R}\bar \nu_0(dz)(e^{z\partial_y}-1-z\partial_y).
 \end{align}
Moreover, for any $k\geq 1$, the function $G^{k}(t,x;\cdot, \cdot)$ is defined through the dual Cauchy problem as follows:
\begin{equation}\label{eq:cauchy2}
  \begin{cases}
 \tilde L_0^{(T,y)} G^{k}(t,x;T,y) = -\sum\limits_{h=1}^k\left(\tilde L_h^{(T,y)}-\tilde L_{h-1}^{(T,y)}\right)
 G^{k-h}(t,x;T,y)\qquad & T>t,\ y\in\mathbb{R}, \\
 G^{k}(T,x;T,y) =0& y\in\mathbb{R},
 \end{cases}
\end{equation}
with
 \begin{align}
 \tilde L_h^{(T,y)}-\tilde L_{h-1}^{(T,y)} =&\ a_h h(h-1)(y-\bar x)^{h-2}+a_h (y-\bar x)^{h-1} \left(2h\partial_y+(y-\bar x)(\partial_{yy}+\partial_y)+h\right)\\
 &-\gamma_hh(y-\bar x)^{h-1}-\gamma_h(y-\bar x)^h\left(\partial_y+1\right)\\
 & +\int_\mathbb{R}\nu_h(dz)(e^z-1-z)\left(h(y-\bar x)^{h-1}+(y-\bar x)^h\partial_y\right) \nonumber\\
 &+\int_\mathbb{R}\bar\nu_h(dz)\left((y+z-\bar x)^he^{z\partial_y}-(y-\bar x)^h-z\left(h(y-\bar x)^{h-1}-(y-\bar
 x)^h\partial_y\right)\right),
 \end{align}
where in defining the adjoint of the operator we use the notation
\begin{align}
 e^{z\partial_y}f(y) :=\sum_{n=0}^\infty\frac{z^n}{n!}\partial_y^n f(y)=f(y+z).
\end{align}
\end{theorem}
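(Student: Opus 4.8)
The plan is to prove that the functions $G^{k}$, which were defined through the backward Cauchy problems in $(t,x)$, also solve the dual Cauchy problems in $(T,y)$; since each family is uniquely determined by its own problem, this identifies them and yields the theorem. I would argue by induction on $k$, the base case $k=0$ being the classical duality between the backward and forward Kolmogorov equations for the fundamental solution of $L_0$.

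For the base case, write $L_0=\partial_t+\mathcal L_0$, where $\mathcal L_0$ collects the spatial terms, and observe that $\tilde L_0^{(T,y)}=-\partial_T+\mathcal L_0^{\ast}$, with $\mathcal L_0^{\ast}$ the formal $L^2(\mathbb R,dy)$-adjoint of $\mathcal L_0$; the reflected measure $\bar\nu_0$ enters precisely because the adjoint of the translation $e^{z\partial_x}$ is the reverse translation $e^{-z\partial_y}$, so that integrating over $z$ and reflecting $z\mapsto -z$ replaces $\nu_0$ by $\bar\nu_0(dz):=\nu_0(-dz)$. Starting from the Chapman--Kolmogorov identity $G^{0}(t,x;T,y)=\int_{\mathbb R}G^{0}(t,x;s,\eta)G^{0}(s,\eta;T,y)\,d\eta$ for $t<s<T$, I would differentiate in $s$, use the backward equation $\partial_s G^{0}(s,\eta;T,y)=-\mathcal L_0 G^{0}(s,\eta;T,y)$ in the second factor, and move $\mathcal L_0$ onto the first factor by integration by parts. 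This gives $\int_{\mathbb R}\big(\partial_s-\mathcal L_0^{\ast}\big)G^{0}(t,x;s,\eta)\,G^{0}(s,\eta;T,y)\,d\eta=0$ for every $s$; letting $s\uparrow T$ and using $G^{0}(s,\cdot;T,y)\to\delta_y$ yields $(-\partial_T+\mathcal L_0^{\ast})G^{0}=\tilde L_0^{(T,y)}G^{0}=0$, while the coincidence limit $T\downarrow t$ supplies the datum $\delta_x$.

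For the inductive step I would use Duhamel's principle in the backward variables: since $G^{0}$ is the fundamental solution of $L_0$, the backward recursion gives $G^{k}(t,x;T,y)=\int_t^T\!\!\int_{\mathbb R}G^{0}(t,x;s,\eta)\sum_{h=1}^{k}(L_h-L_{h-1})^{(s,\eta)}G^{k-h}(s,\eta;T,y)\,d\eta\,ds$, where the purely spatial operators $L_h-L_{h-1}$ act on $\eta$. Applying $\tilde L_0^{(T,y)}=-\partial_T+\mathcal L_0^{\ast}$ and differentiating under the integral, the $\partial_T$ hitting the upper limit $s=T$ produces the boundary term $\int_{\mathbb R}G^{0}(t,x;T,\eta)(L_k-L_{k-1})^{(T,\eta)}\delta(\eta-y)\,d\eta$, only the $h=k$ summand surviving the terminal conditions $G^{0}(T,\cdot;T,y)=\delta_y$ and $G^{j}(T,\cdot;T,y)=0$ for $j\ge1$; moving the operator onto $G^{0}$ by the adjoint identity collapses this to $(\tilde L_k-\tilde L_{k-1})^{(T,y)}G^{0}$, which is exactly the $h=k$ term of the claimed right-hand side. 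The remaining interior contribution equals $\int_t^T\!\!\int_{\mathbb R}G^{0}\,\tilde L_0^{(T,y)}\big[\sum_{h}(L_h-L_{h-1})^{(s,\eta)}G^{k-h}\big]\,d\eta\,ds$; commuting $\tilde L_0^{(T,y)}$ past the disjoint-variable operators $(L_h-L_{h-1})^{(s,\eta)}$ and invoking the inductive hypothesis on each $G^{k-h}$ (the $h=k$ term vanishing by the base case $\tilde L_0 G^{0}=0$) turns it into a double sum of $(L_h-L_{h-1})^{(s,\eta)}(\tilde L_l-\tilde L_{l-1})^{(T,y)}G^{k-h-l}$ over the symmetric index set $\{h,l\ge1:\,h+l\le k\}$. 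Since $G^{k-h-l}$ depends on $(h,l)$ only through $h+l$, relabelling the summation indices and re-applying the Duhamel representation to each $G^{k-h}(t,x;T,y)$ identifies this contribution with $-\sum_{h=1}^{k-1}(\tilde L_h-\tilde L_{h-1})^{(T,y)}G^{k-h}$; adding the two pieces produces the desired forward recursion, while the datum $G^{k}(T,x;T,\cdot)=0$ is immediate as the $s$-integral over $[T,T]$ vanishes.

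The main obstacle is the rigorous justification of these manipulations rather than the algebra: one must verify the formal-adjoint identities for the nonlocal jump terms (tracking $\bar\nu_h$ and the polynomial weights $(x-\bar x)^h$, whose integration by parts generates the lower-order terms $a_h h(h-1)(y-\bar x)^{h-2}$, $\gamma_h h(y-\bar x)^{h-1}$, and so on appearing in $\tilde L_h-\tilde L_{h-1}$), and one must control the differentiation under the integral sign together with the boundary behaviour at $s=T$ that produces the delta contribution. This requires Gaussian-type bounds and adequate regularity and decay of $G^{0}$ and its $x$-derivatives, together with the integrability assumption \eqref{nusomm}, so that all integrals and the limit $s\uparrow T$ are legitimate.
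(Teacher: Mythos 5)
The paper does not actually supply a proof of this theorem: it is stated as the defining dual formulation, with the underlying duality argument imported from the adjoint-expansion framework of \cite{pascucci-riga} (the text immediately passes to solving \eqref{eq:cauchy1}--\eqref{eq:cauchy2} in Fourier space). Your proposal is therefore a reconstruction rather than an alternative, and it is the correct one: the base case is the classical backward/forward Kolmogorov duality obtained from Chapman--Kolmogorov plus integration by parts (correctly producing the reflected measure $\bar\nu_0(dz)=\nu_0(-dz)$ from the adjoint of the shift $e^{z\partial_x}$, and the lower-order terms $a_h h(h-1)(y-\bar x)^{h-2}$, $2h a_h(y-\bar x)^{h-1}\partial_y$, etc.\ from differentiating the polynomial weights), and the inductive step via the Duhamel representation of $G^k$, with the $-\partial_T$ acting on the upper integration limit generating the $h=k$ boundary term $-(\tilde L_k-\tilde L_{k-1})^{(T,y)}G^0$ and the interior term reassembling, after exchanging the double sum and recognizing the inner sum as the Duhamel formula for $G^{k-l}$, into $-\sum_{l=1}^{k-1}(\tilde L_l-\tilde L_{l-1})^{(T,y)}G^{k-l}$. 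Two small points of care: the boundary term you display is missing the minus sign coming from $-\partial_T$ (it is needed to match the sign convention of \eqref{eq:cauchy2}), and the commutation of $\tilde L_0^{(T,y)}$ with $(L_h-L_{h-1})^{(s,\eta)}$ rests on the latter being purely spatial in $\eta$, which should be stated. As you note, the genuinely technical content is the justification of differentiation under the integral sign, the distributional limit at $s=T$, and the adjoint identities for the nonlocal terms under \eqref{nusomm}; these are exactly the points handled in \cite{pascucci-riga}, so your outline is consistent with the argument the paper relies on.
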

\noindent
Notice that the adjoint Cauchy problems \eqref{eq:cauchy1} and \eqref{eq:cauchy2} admit a
solution in the Fourier space and can be solved explicitly; in fact, we have
 $$\mathcal{F}\left(\tilde L_0^{(T,\cdot)}G^{k}(t,x;T,\cdot)\right)(\xi) = \psi(\xi)\hat G^{k}(t,x;T,\x)-\partial_T\hat G^{k}(t,x;T,\x),$$
where $ \psi(\xi)$ is the characteristic exponent of the L\'evy process with coefficients
$\gamma_0$, $a_0$ and $\nu_0(dz)$, that is
\begin{equation}\label{charexpp}
 \psi(\xi) = i\xi(r+\gamma_0)
 +a_0(-\xi^2-i\xi)-\gamma_0-\int_\mathbb{R}\nu_0(dz)(e^z-1-z)i\xi+\int_\mathbb{R}\nu_0(dz)(e^{iz\xi}-1-iz\xi).
\end{equation}
Thus the solution (in the Fourier space) to problems \eqref{eq:cauchy1} and \eqref{eq:cauchy2} is
given by
\begin{equation}\label{theGk}
\begin{split}
 \hat G^{0}(t,x;T,\xi) &= e^{i\xi x+(T-t)\psi(\xi)},\\
 \hat G^{k}(t,x;T,\xi) &= -\int_t^Te^{\psi(\xi)(T-s)}\mathcal{F}\left(\sum_{h=1}^k\left(\tilde L_h^{(s,\cdot)}-\tilde
 L_{h-1}^{(s,\cdot)}\right)G^{k-h}(t,x;s,\cdot)\right)(\xi)ds,\qquad k\ge1.
\end{split}
\end{equation}
Now we consider the general framework and in particular we drop the assumption on the existence of
the fundamental solution of $L_{0}$: in this case, we define the $n$th-order approximation of the
characteristic function $\Gh$ as in \eqref{adapprox}, with $\hat{G}^{k}$ given by \eqref{theGk}.
We also notice that
\begin{align}
 &\mathcal{F}\left(\left( \tilde L_h^{(s,\cdot)}-\tilde L_{h-1}^{(s,\cdot)}\right)u(s,\cdot)\right)(\x) =\\
 &\qquad \left(a_hh(h-1)
 (-i\partial_\xi-\bar x)^{h-2}+a_h
 (-i\partial_\xi-\bar x)^{h-1} \left(-2hi\xi+(-i\partial_\xi-\bar
 x)(-\xi^2-i\xi)+h\right)\right)\hat u(s,\xi)\\
 &\qquad-\left(\gamma_hh(-i\partial_\xi-\bar x)^{h-1}-\gamma_h(-i\partial_\xi-\bar x)^h\left(i\xi-1\right) \right)\hat u(s,\xi)\\
 &\qquad +\int_\mathbb{R}\nu_h(dz)(e^z-1-z)\left(h(-i\partial_\xi-\bar x)^{h-1}-(-i\partial_\xi-\bar x)^hi\xi\right)\hat u(s,\xi) \nonumber\\
 &\qquad+\int_\mathbb{R}\nu_h(dz)\left( (-i\partial_y-z-\bar x)^h e^{i\xi z}-(-i\partial_y-\bar x)^h+z\left(h(-i\partial_\xi-\bar x)^{h-1}
 -(-i\partial_\xi-\bar x)^hi\xi\right)\right)\hat u(s,\xi).
 \end{align}

\begin{remark}\label{r1}
In case the coefficients $\gamma$, $\sigma$, $\nu$ depend on time, the solutions to the Cauchy
problems are similar:
\begin{equation}\label{Ghat}
\begin{split}
 \hat G^{0}(t,x;T,\xi) &= e^{i\xi x}e^{\int_t^T\psi(s,\xi)ds},\\
 \hat G^{k}(t,x;T,\xi) &= -\int_t^Te^{\int_s^T\psi(\tau,\xi)d\tau}\mathcal{F}\left(\sum_{h=1}^k\left(\tilde
  L_h^{(s,\cdot)}(s)-\tilde L_{h-1}^{(s,\cdot)}(s)\right)G^{k-h}(t,x;s,\cdot)\right)(\xi)ds,
\end{split}
\end{equation}
with
\begin{align}
 &\psi(s,\xi) = i\xi(r+\gamma_0(s))
 +a_0(s)(-\xi^2-i\xi)-\int_\mathbb{R}\nu_0(s,dz)(e^z-1-z)i\xi+\int_\mathbb{R}\nu_0(s,dz)(e^{iz\xi}-1-iz\xi),\\
 &\tilde L_h^{(s,y)}(s)-\tilde L_{h-1}^{(s,y)}(s) = a_h(s) h(h-1)(y-\bar x)^{h-2}+a_h(s) (y-\bar x)^{h-1} \left(2h\partial_y+(y-\bar x)(\partial_{yy}+\partial_y)+h\right)\\
 &\qquad\qquad\qquad\qquad\qquad-\gamma_h(s)h(y-\bar x)^{h-1}-\gamma_h(s)(y-\bar x)^h\left(\partial_y+1\right)\\
 & \qquad\qquad\qquad\qquad\qquad+\int_\mathbb{R}\nu_h(s,dz)(e^z-1-z)\left(h(y-\bar x)^{h-1}+(y-\bar x)^h\partial_y\right) \nonumber\\
 &\qquad\qquad\qquad\qquad\qquad+\int_\mathbb{R}\bar\nu_h(s,dz)\left((y+z-\bar x)^he^{z\partial_y}-(y-\bar x)^h-z\left(h(y-\bar x)^{h-1}-(y-\bar
 x)^h\partial_y\right)\right).
\end{align}
\end{remark}
\noindent From these results one can already see that the dependency on $x$ comes in through
$e^{i\xi x}$ and after taking derivatives the dependency on $x$ will take the form $(x-\bar x)^m
e^{i\xi x}$: this fact will be crucial in our analysis.
\begin{example}
To see the above dependency more explicitly for the second-order approximation of the characteristic function we consider, for ease of notation, a simplified model: a one-dimensional
local L\'evy model where the log-price solves the SDE
\begin{equation}\label{eq:model}
 dX_t=\mu (X_{t})dt+\sigma (X_{t})dW_t+\int_\mathbb{R}d\tilde N_t(dz)z.
\end{equation}
This model is a simplification of the original model,
since we consider only a local volatility function, and no local default or state-dependent L\'evy
measure. Thus only a Taylor expansion of the local volatility coefficient is used. However, the
dependency that we will see generalizes in the same way to the local default and state-dependent measure.
\noindent
By the martingale condition we have
 $$\mu (x) = r -a(x)-\int_\mathbb{R}\nu(dz)(e^z-1),$$
and therefore the Kolmogorov operator of \eqref{eq:model} reads
\begin{align}
 Lu(t,x)=&\ \partial_t u(t,x)+r \partial_x
 u(t,x)+a(t,x)(\partial_{xx}-\partial_x)u(t,x)\\
 &-\int_\mathbb{R}\nu(dz)(e^z-1)+\int_{\mathbb{R}}\nu(dz)\left(u(t,x+z)-u(t,x)\right).
\end{align}
In this case, we have the following explicit approximation formulas for the characteristic
function $\Gh(t,x;T,\x)$:
\begin{equation}\label{struc}
 \Gh(t,x;T,\x)\ \approx\ \Gh^{(n)}(t,x;T,\x):= e^{i\xi x + (T-t)\psi(\xi)} \sum_{k=0}^n \hat F^{k}(t,x;T,\xi),\qquad n\ge 0,
\end{equation}
with
 $$ \psi(\xi) = ir\xi -a_0(\xi^2+i\xi)-\int_\mathbb{R}\nu(dz)(e^z-1)i\xi+\int_{\mathbb{R}}\nu (dz)\left(e^{iz\xi }-1\right),$$
and
\begin{align}\label{struc2}
 \hat F^{k}(t,x;T,\xi)&=\sum_{h=0}^{k}g^{(k)}_h(T-t,\xi)(x-\bar x)^h;
\end{align}
here, for $k=0,1,2$, we have
\begin{align}
 g^{(0)}_0(s,\xi) =&\ 1,\\
 g^{(1)}_0(s,\xi) =&\ a_1 s^2 (\xi^2+i\xi )\frac{i}{2}\psi'(\xi),\\
 g^{(1)}_1(s,\xi) =&\, -a_1 s (\xi^2+i\xi ),\\
 g^{(2)}_0(s,\xi) =&\ \frac{1}{2}s^2 a_2\xi(i+\xi )\psi ''(\xi)-\frac{1}{6}s^3\xi(i+\xi)(a_1^2(i+2\xi)\psi'(\xi)-2a_2\psi'(\xi)^2+a_1^2\xi(i+\xi)\psi''(\xi))\\
 &-\frac{1}{8}s^4a_1^2\xi^2(i+\xi)^2\psi'(\xi)^2,\\
 g^{(2)}_1(s,\xi) =&\ \frac{1}{2}s^2\xi
 (i+\xi)(a_1^2(1-2i\xi)+2ia_2\psi''(\xi))-\frac{1}{2}s^3ia_1^2\xi^2(i+\xi)^2\psi''(\xi),\\
 g^{(2)}_2(s,\xi) =&\, -a_2 s \xi(i+\xi)+\frac{1}{2}s^2a_1^2\xi^2(i+\xi)^2.
\end{align}
\end{example}
\bigskip

Using the notation from above, we can write in the same way the approximation formulas for the
general case. Here we present the results for $k=0,1$, since higher-order formulas are too long to
include. For the full formula we refer to Appendix \ref{app2}. We have:
\begin{align}
g_0^{(0)}(s,\xi)=&1,\\ \label{g01}
g_0^{(1)}(s,\xi)=&\frac{i}{2}a_1s^2(\x^2+i\x)\psi'(\x)+\frac{1}{2}\gamma_1s^2(i+\x)\psi'(\x)-\frac{1}{2}\int_\mathbb{R}\nu_1(dz)(e^z-1-z)s^2\x\psi'(\x)\\
&-\frac{1}{2}\int_\mathbb{R}\nu_1(dz)(ie^{i\xi z}-i+\x z)s^2\psi'(\x),\\
g_1^{(1)}(s,\x)=&-a_1s(\xi^2+i\xi)+\gamma_1si(i+\x)-\int_\mathbb{R}\nu_1(dz)(e^z-1-z)s\x i\\
&+\int_\mathbb{R}\nu_1(dz)(e^{i\x z}-1-\x i z)s.
\end{align}
\begin{remark}\label{r3}
From \eqref{struc}-\eqref{struc2} and  \eqref{struc3}  we clearly see that the approximation of order $n$ is a function
of the form
\begin{equation}\label{struc3}
  \Gh^{(n)}(t,x;T,\x):= e^{i\xi x} \sum_{k=0}^n (x-\bar x)^k g_{n,k}(t,T,\x),
\end{equation}
where the coefficients $g_{n,k}$, with $0\le k\le n$, depend only on $t,T$ and $\x$, but not on
$x$. The approximation formula can thus always
be split into a sum of products of functions depending only on $\xi$ and functions that are linear combinations of $(x-\bar x)^m e^{i\xi x}$, $m\in\N_{0}$.
\end{remark}

\section{Bermudan option valuation}\label{section2}
A Bermudan option is a financial contract in which the holder can exercise at a predetermined
finite set of exercise moments prior to maturity, and the holder of the option receives a payoff
when exercising. Consider a Bermudan option with a set of $M$ exercise moments $\{t_1,...,t_M\}$,
with $0\le t_{1}<t_{2}<\cdots<t_{M}=T$.
When the option is exercised at time $t_{m}$ the holder receives the payoff
$\Phi\left(t_{m},S_{t_{m}}\right)$.
Recalling \eqref{e1}, the no-arbitrage value of the Bermudan option at time $t$ is
\begin{align}v\left(t,X_{t}\right)=\caratt_{\{\zeta>t\}}\sup_{\t \in \mathcal{T}_{t}}E\[e^{-\int_{t}^{\t}
\left(r+\gam(s,X_s)\right) ds}\phi(\t,X_{\t})|X_{t}\],\end{align} where $\phi(t,x)=\Phi(t,e^{x})$
and $\mathcal{T}_{t}$ is the set of all $\mathcal{G}$-stopping times taking values in
$\{t_1,...,t_M\}\cap[t,T]$. For a Bermudan Put option with strike price $K$, we simply have
$\phi(t,x)=\left(K-e^{x}\right)^{+}$. By the dynamic programming approach, the option value can be
expressed by a backward recursion as
 $$v(t_{M},x)=\caratt_{\{\zeta>t_{M}\}}\phi(t_{M},x)$$
and
\begin{equation}\label{bermud}
 \begin{cases}
 c(t,x)=
 E\left[
 e^{\int_{t}^{t_m}\left(r+\gamma(s,X_s)\right)ds}v(t_{m},X_{t_{m}})|X_{t}=x\right],\qquad &t\in[t_{m-1},t_{m}[\\
 v(t_{m-1},x)=\caratt_{\{\zeta>t_{m-1}\}}\max\{\phi(t_{m-1},x),c(t_{m-1},x)\},\qquad
 &m\in\{2,\dots,M\}.
\end{cases}
\end{equation}
In the above notation $v(t,x)$ is the option value and $c(t,x)$ is the so-called continuation
value. The option value is set to be $v(t,x) = c(t,x)$ for $t\in\, ]t_{m-1},t_m[$, and, if $t_1>0$, also for $t\in [0,t_1[$.
\begin{remark}
Since the payoff of a Call option grows exponentially with the log-stock price, this  may
introduce significant cancellation errors for large domain sizes. For this reason we price Put
options only using our approach and we employ the well-known Put-Call parity to price Calls via
Puts. This is a rather standard argument (see, for instance, \cite{bowen}).
\end{remark}

\subsection{An algorithm for pricing Bermudan Put options}\label{sec31}
The COS method proposed by \cite{FangO09} is based on the insight that the Fourier-cosine series
coefficients of $\G(t,x;T,dy)$ (and therefore also of option prices) are
closely related to the characteristic function of the underlying process, namely the following relationship holds:
 $$\int_a^b e^{i\frac{k\pi}{b-a}}\Gamma (t,x;T,dy) \approx \Gh\left(t,x;T,\frac{k\pi}{b-a}\right).$$
The COS method provides a way to calculating expected values (integrals) of the form
 $$v(t,x)=\int_\mathbb{R}\phi(T,y)\Gamma (t,x;T,dy),$$
and it consists of three approximation steps:
\begin{enumerate}
\item In the first step we truncate the infinite integration range to $[a,b]$ to obtain approximation $v_1$:
 $$v_1(t,x):=\int_a^b \phi(T,y)\Gamma (t,x;T,dy).$$
We assume this can be done due to the rapid decay of the distribution at infinity.
\item In the second step we replace the distribution with its cosine expansion and we get
 $$v_1(t,x):=\frac{b-a}{2}
 \sideset{}{'}\sum_{k=0}^\infty A_k(t,x;T)V_k(T),$$
 where $ \sideset{}{'}\sum$ indicates that the first term in the summation is weighted by one-half and
\begin{align}
 A_k(t,x;T)&=\frac{2}{b-a}\int_a^b\cos\left(k\pi\frac{y-a}{b-a}\right)\Gamma (t,x;T,dy),\\
 V_k(T)&=\frac{2}{b-a}\int_a^b \cos\left(k\pi\frac{y-a}{b-a}\right)\phi(T,y)dy,
\end{align}
are the Fourier-cosine series coefficients of the distribution and of the payoff function at time
$T$ respectively. Due to the rapid decay of the Fourier-cosine series coefficients, we truncate
the series summation and obtain approximation $v_2$:
 $$v_2(t,x):=\frac{b-a}{2}
 \sideset{}{'}\sum_{k=0}^{N-1} A_k(t,x;T)V_k(T).$$
\item In the third step we use the fact that the coefficients $A_k$ can be rewritten using the truncated characteristic function:
 $$A_k(t,x;T)=\frac{2}{b-a}\textnormal{Re}\left(e^{-ik\pi\frac{a}{b-a}}\int_a^be^{i\frac{k\pi}{b-a}y}\Gamma(t,x;T,dy)\right).$$
The finite integration range can be approximated as
 $$\int_a^be^{i\frac{k\pi}{b-a}y}\Gamma(t,x;T,dy)\approx\int_\mathbb{R}e^{i\frac{k\pi}{b-a}y}\Gamma(t,x;T,dy)
  =\Gh\left(t,x;T,\frac{k\pi}{b-a}\right).$$
Thus in the last step we replace $A_k$ by its approximation:
\begin{equation}\label{Fk}
 \frac{2}{b-a}\textnormal{Re}\left(
  e^{-ik\pi\frac{a}{b-a}}\Gh\left(t,x;T,\frac{k\pi}{b-a}\right)\right),
\end{equation}
and obtain approximation $v_3$:
\begin{equation}\label{approxv3}
 v_3(t,x):=\sideset{}{'}\sum_{k=0}^{N-1}\textnormal{Re}\left(
 e^{-ik\pi\frac{a}{b-a}}\Gh\left(t,x;T,\frac{k\pi}{b-a}\right)\right)V_k(T).
\end{equation}
\end{enumerate}

\medskip Next we go back to the Bermudan Put pricing problem.
Remembering that the expected value $c(t,x)$ in \eqref{bermud} can be rewritten in integral form
as in \eqref{eq:v.def1}, we have
\begin{align}
 c(t,x) = e^{-r(t_{m}-t)}\int_\Rb  v(t_m,y)\G(t,x;t_{m},dy),\qquad t\in[t_{m-1},t_{m}[.
\end{align}
Then we use the Fourier-cosine expansion \eqref{approxv3}, so that we get the approximation:
\begin{align}\label{eq:conti}
 &\hat c(t,x)= e^{-r(t_{m}-t)}\sideset{}{'}\sum_{k=0}^{N-1}
 \textnormal{Re}\left(
 e^{-ik\pi\frac{a}{b-a}}\Gh\left(t,x;t_{m},\frac{k\pi}{b-a}\right)\right)V_k(t_{m}),\qquad t\in[t_{m-1},t_{m}[\\
 &V_k(t_m)=\frac{2}{b-a}\int_a^b \cos\left(k\pi \frac{y-a}{b-a}\right)\max\{\phi(t_{m},y),c(t_{m},y)\}dy,
\end{align}
with $\phi(t,x)=\left(K-e^{x}\right)^{+}$.

Next we recover the coefficients $\left(V_k(t_m)\right)_{k=0,1,...,N-1}$ from
$\left(V_k(t_{m+1})\right)_{k=0,1,...,N-1}$. To this end, we split the integral in the definition
of $V_k(t_m)$ into two parts using the early-exercise point $x_m^*$, which is the point where the
continuation value is equal to the payoff, i.e. $c(t_m,x_m^*)=\phi(t_m,x_m^*)$; thus we have
 $$V_k(t_m)=F_k(t_{m},x_m^*)+C_k(t_{m},x_m^*),\qquad m=M-1,M-2,...,1,$$
where
\begin{equation}\label{eq:vcoef}
\begin{split}
 F_k(t_{m},x_m^*)&:=\frac{2}{b-a}\int_a^{x_m^*}\phi(t_m,y)\cos\left(k\pi\frac{y-a}{b-a}\right)dy,\\
 C_k(t_{m},x_m^*)&:=\frac{2}{b-a}\int_{x_m^*}^b c(t_m,y)\cos\left(k\pi\frac{y-a}{b-a}\right)dy,
\end{split}
\end{equation}
and $V_k(t_M) =F_k(t_{M},\log K).$
\begin{remark}
Since we have a semi-analytic formula for $\hat c(t_m,x)$, we can easily find the derivatives
with respect to $x$ and use Newton's method to find the point $x_m^*$ such that
$c(t_m,x_m^*)=\phi(t_m,x_m^*)$. A good starting point for the Newton method is $\log K$, since
$x_m^*\leq \log K$.
\end{remark}
The coefficients $F_k(t_m,x_m^*)$ can be computed analytically using $x_m^*\leq \log
K$, so that we have
\begin{align}
 F_k(t_m,x_m^*) &=\frac{2}{b-a}\int_{a}^{x_m^*}(K-e^y)\cos\left(k\pi\frac{y-a}{b-a}\right)dy\\
 &=\frac{2}{b-a}K\Psi_k(a,x_m^*)-\frac{2}{b-a}\chi_k(a,x_m^*),
\end{align}
where
\begin{align}
 \chi_k(a,x_m^*)&=\int_{a}^{x_m^*}e^y\cos\left(k\pi\frac{y-a}{b-a}\right)dy\\
 &=\frac{1}{1+\left(\frac{k\pi}{b-a}\right)^2}\left(e^{x_m^*}\cos\left(k\pi\frac{x_m^*-a}{b-a}\right)-e^{a}+\frac{k\pi e^{x_m^*}}{b-a}
 \sin\left(k\pi\frac{x_m^*-a}{b-a}\right)\right),\\
 \Psi_k(a,x_m^*)&=\int_{a}^{x_m^*}\cos\left(k\pi\frac{y-a}{b-a}\right)dy=
                \begin{cases}
                  \frac{b-a}{k\pi}\sin\left(k\pi\frac{x_m^*-a}{b-a}\right),\qquad &k\neq 0,\\
                  x_m^*-a, &k=0.
                \end{cases}
\end{align}
On the other hand, by inserting the approximation \eqref{eq:conti} for
the continuation value into the formula for $C_k(t_{m},x_{m}^*)$ have the following coefficients
$\hat C_k$ for $m =M-1,M-2,...,1$:
\begin{equation}\label{eq:contin}
 \hat C_k(t_{m},x_m^*) = \frac{2e^{-r(t_{m+1}-t_{m})}}{b-a}\sideset{}{'}\sum_{j=0}^{N-1}V_j(t_{m+1})\int_{x_m^*}^{b}
 \mathrm{Re}\left(e^{-ij\pi\frac{a}{b-a}}\Gh\left(t_{m},x;t_{m+1},\frac{j\pi}{b-a}\right)\right)
 \cos\left(k\pi\frac{x-a}{b-a}\right)dx.
\end{equation}
Thus the algorithm for pricing Bermudan options can then be summarized as follows:
\begin{figure}[h!]
\caption{Algorithm \ref{sec31}: Bermudan option valuation}
\begin{mdframed}
\begin{enumerate}
\item For $k=0,1,...,N-1$:
\begin{itemize}
\item At time $t_M$, the coefficients are exact: $V_k(t_M)=F_k(t_M,\log K)$, as in \eqref{eq:vcoef}.
\end{itemize}
\item For $m=M-1$ to 1:
\begin{itemize}
\item Determine the early-exercise point $x_m^*$ using Newton's method;
\item Compute $\hat V_k(t_m)$ using formula $\hat V_k(t_m):=F_k(t_{m},x_m^*)+\hat C_k(t_{m},x_m^*)$, \eqref{eq:vcoef} and \eqref{eq:contin}. Use an FFT for the continuation value (see Section \ref{sec32}).
\end{itemize}
\item Final step: using $\hat V_k(t_1)$ determine the option price $\hat v(0,x)=\hat c(0,x)$ using \eqref{eq:conti}.
\end{enumerate}
\end{mdframed}
\end{figure}

 \subsection{An efficient algorithm for the continuation value}\label{sec32}
In this section we derive an efficient algorithm for calculating $\hat C_k(t_{m},x_m^*)$ in
\eqref{eq:contin}. When considering an exponential L\'evy process with constant coefficients as
done in \cite{FangO09}, the continuation value can be calculated using a Fast Fourier Transform
(FFT). This can be done due to the fact that the characteristic function $\Gh(t,x;T,\x)$ can be
split into a product of a function depending only on $\xi$ and a function of the form $e^{i\xi
x}$. Note that we typically have $\xi = \frac{j\pi}{b-a}$. The integration over $x$ results in a
sum of a Hankel and Toeplitz matrix (with indices $(j+k)$ and $(j-k)$ respectively). The
matrix-vector product, with these special matrices, can be transformed into a circular convolution
which can be computed using FFTs.

From \eqref{struc3} we know that the $n$th-order approximation of the characteristic function is
of the form:
\begin{equation*}
  \Gh^{(n)}(t_m,x;t_{m+1},\x)= e^{i\xi x} \sum_{k=0}^n (x-\bar x)^k g_{n,k}(t_m,t_{m+1},\x),
\end{equation*}
where the coefficients $g_{n,k}(t,T,\x)$, with $0\le k\le n$, depend only on $t,T$ and $\x$, but not on $x$. Using \eqref{struc3} we write the continuation value as:
\begin{align}
\hat C_k(t_{m},x_m^*)
&= \sum_{h=0}^n e^{-r(t_{m+1}-t_{m})}\sideset{}{'}\sum_{j=0}^{N-1}\mathrm{Re}\left(V_j(t_m)g_{n,h}\left(t_m,t_{m+1},\frac{j\pi}{b-a}\right)M^h_{k,j}(x_m^*,b)\right),
\end{align}
where we have interchanged the sums and integral and defined:
\begin{align}M_{k,j}^h(x_m^*,b) = \frac{2}{b-a}\int_{x_m^*}^{b}  e^{ij\pi\frac{x-a}{b-a}}(x-\bar x)^h\cos\left(k\pi\frac{x-a}{b-a}\right)dx\label{eq:integraal}\end{align}
This can be written in vectorized form as:
\begin{align}
\bold{\hat C}_k(t_{m},x_m^*) =\sum_{h=1}^n e^{-r(t_{m+1}-t_m)}
\mathrm{Re}\left(\bold V(t_{m+1})\mathcal{M}^h(x_m^*,b)\Lambda^h\right),
\end{align}
where $\bold V(t_{m+1})$ is the vector $[V_0(t_{m+1}),...,V_{N-1}(t_{m+1})]^T$ and
$\mathcal{M}^h(x_m^*,b)\Lambda^h$ is a matrix-matrix product with $\mathcal{M}^h$ being a matrix
with elements $\{M_{k,j}^h\}_{k,j=0}^{N-1}$ and $\Lambda^h$ is a diagonal matrix with elements
 $$g_{n,h}\Big(t_m,t_{m+1},\frac{j\pi}{b-a}\Big),\qquad j=0,\dots,N-1.$$
We have the following theorem for calculating a generalized form of the integral in
\eqref{eq:integraal} which is used in the calculation of the continuation value.
\begin{theorem}\label{theoremhankel}
The matrix $\mathcal{M}$ with elements $\{M_{k,j}\}_{k,j=0}^{N-1}$ such that:
\begin{align}
M_{k,j}=\int e^{jx}\cos(kx)x^mdx,
\end{align}
consists of sums of Hankel and Toeplitz matrices.
\end{theorem}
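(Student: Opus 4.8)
The plan is to compute the integral $M_{k,j}=\int e^{jx}\cos(kx)x^m\,dx$ explicitly and then read off its dependence on the indices $k$ and $j$, showing that this dependence splits into a part depending only on the sum $j+k$ (the Hankel part) and a part depending only on the difference $j-k$ (the Toeplitz part). First I would replace the cosine by its complex-exponential form, $\cos(kx)=\tfrac12(e^{ikx}+e^{-ikx})$, so that the integrand becomes a linear combination of $x^m e^{(j+ik)x}$ and $x^m e^{(j-ik)x}$. This immediately exposes the two relevant combinations of indices: the exponents depend on $k$ and $j$ only through $j+ik$ and $j-ik$.

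Next I would evaluate the antiderivative of $x^m e^{c x}$ for a generic complex constant $c\neq 0$. By repeated integration by parts (or by citing the standard closed form) one has
\begin{align}
 \int x^m e^{cx}\,dx = e^{cx}\sum_{\ell=0}^{m}(-1)^\ell\frac{m!}{(m-\ell)!}\frac{x^{m-\ell}}{c^{\ell+1}}.
\end{align}
Substituting $c=j+ik$ and $c=j-ik$ and adding the two contributions (each weighted by $\tfrac12$) gives an explicit expression for $M_{k,j}$. When the antiderivative is evaluated between the two limits of integration, each term is a sum of expressions of the form $e^{(j\pm ik)x}(j\pm ik)^{-(\ell+1)}$ times a power of $x$. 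The key structural observation is that $e^{(j+ik)x}=e^{jx}(\cos kx+\ii\sin kx)$ and that the prefactors $(j+ik)^{-(\ell+1)}$ and $(j-ik)^{-(\ell+1)}$ are complex conjugates; after combining the $+ik$ and $-ik$ branches the result is real and every summand depends on the index pair only through $j+ik$ or its conjugate.

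The main obstacle — and the crux of the argument — is to package this index dependence correctly as a matrix decomposition. A matrix entry that is a function of $j+k$ alone is constant along anti-diagonals, which is precisely a Hankel matrix; an entry that is a function of $j-k$ alone is constant along diagonals, which is a Toeplitz matrix. I would therefore argue that after expanding the conjugate pairs, the real part of each term $(j\pm ik)^{-(\ell+1)}$ can be rewritten, via the binomial expansion of $(j\pm ik)^{\ell+1}$ over a common modulus, so that the angular (oscillatory) dependence is governed by $\cos$ and $\sin$ of arguments linear in $j$ and $k$; grouping the $+ik$ and $-ik$ pieces then yields terms in $\cos((j+k)\theta)$-type quantities and $\cos((j-k)\theta)$-type quantities. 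Each such term contributes either a Hankel block (indexed by $j+k$) or a Toeplitz block (indexed by $j-k$), so that $\mathcal{M}$ is a finite sum of Hankel and Toeplitz matrices, as claimed.

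Finally I would remark that this is exactly the structure exploited in \cite{FangO09}: once $\mathcal{M}$ is written as a sum of Hankel and Toeplitz matrices, each matrix-vector product $\mathcal{M}^h(x_m^*,b)\,\mathbf{V}(t_{m+1})$ can be embedded into a circular convolution and evaluated by the FFT in $O(N\log N)$ operations. The new ingredient relative to the constant-coefficient case is the extra factor $x^m$ (here $(x-\bar x)^h$), which raises the power in the integrand; the integration-by-parts formula above shows that this merely produces finitely many additional terms of the same Hankel/Toeplitz type, so the FFT-based algorithm extends without loss of efficiency.
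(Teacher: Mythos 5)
There is a genuine gap at the crux of your argument, and it stems from reading the exponential in the statement as the real exponential $e^{jx}$. Under that reading you correctly find that the integrand splits into $\tfrac12 x^m e^{(j+ik)x}+\tfrac12 x^m e^{(j-ik)x}$, so every term of the antiderivative depends on the index pair only through $j\pm ik$. But a function of $j+ik$ is constant neither along anti-diagonals nor along diagonals: $|j+ik|=\sqrt{j^2+k^2}$ and $\arg(j+ik)$ are not functions of $j+k$ or of $j-k$, so the proposed polar/binomial repackaging into ``$\cos((j+k)\theta)$-type'' and ``$\cos((j-k)\theta)$-type'' terms cannot be carried out --- there is no angle $\theta$, independent of $(j,k)$, for which the oscillatory dependence becomes linear in $j+k$ or $j-k$. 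Indeed, under the literal reading the claimed decomposition fails already for $m=0$: the entries $\tfrac12 e^{(j\pm ik)x}/(j\pm ik)$ evaluated at the endpoints take distinct values along an anti-diagonal and along a diagonal, and a sum of a Hankel and a Toeplitz matrix has only $O(N)$ degrees of freedom, so it cannot reproduce such entries.

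The statement is shorthand for the integral \eqref{eq:integraal}, where the exponential is $e^{ij\pi\frac{x-a}{b-a}}$; that is, the intended kernel is $e^{ijx}\cos(kx)x^m$. With this reading your strategy goes through at once and coincides with the paper's proof: one has
\begin{equation}
 e^{ijx}\cos(kx)=\tfrac12\left(e^{i(j+k)x}+e^{i(j-k)x}\right),
\end{equation}
(equivalently, the paper expands $e^{ijx}=\cos(jx)+i\sin(jx)$ and applies product-to-sum identities), and the antiderivative of $x^m e^{inx}$ --- your integration-by-parts formula with $c=in$ --- depends on the indices only through $n$. Taking $n=j+k$ yields the Hankel part and $n=j-k$ the Toeplitz part. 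So the repair is only to replace $j\pm ik$ by $i(j\pm k)$ throughout; the remainder of your outline (the factor $x^m$ producing finitely many additional terms of the same type, and the resulting $O(N\log N)$ FFT evaluation) is correct and matches the paper.
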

\begin{proof}
Using standard trigonometric identities we can rewrite the integral as:
\begin{align}
M_{k,j} &=\int \cos(jx)\cos(kx)x^m dx+i\int \sin(jx)\cos(kx)x^m dx\\ &= M_{k,j}^H+i M_{k,j}^T,
\end{align}
where we have defined:
\begin{align}
M_{k,j}^H & = \frac{1}{2}\int \cos((j+k)x)x^mdx+\frac{1}{2}\int \sin((j+k)x)x^mdx,\\ M_{k,j}^T & =
\frac{1}{2}\int \cos((j-k)x)x^mdx+\frac{1}{2}\int \sin((j-k)x)x^mdx.
\end{align}
The following holds:
\begin{align}
 \int\cos(nx)x^mdx=&\, \frac{1}{n}x^m\sin(nx)+\sum_{i=1}^{\ceil{m/2}}(-1)^{i+1}\frac{\prod_{j=0}^{2i-2}(m-j)}{n^{2i}}\cos(nx)x^{m-(2i-1)}\\
 & -\sum_{i=1}^{\floor{m/2}}(-1)^{i+1}\frac{\prod_{j=0}^{2i-1}(m-j)}{n^{2i+1}}\sin(nx)x^{m-2i},\\
 \int \sin(nx)x^mdx=&\, -\frac{1}{n}x^m\cos(nx)+\sum_{i=1}^{\ceil{m/2}}(-1)^{i+1}\frac{\prod_{j=0}^{2i-2}(m-j)}{n^{2i}}\sin(nx)x^{m-(2i-1)}\\
 &-\sum_{i=1}^{\floor{m/2}}(-1)^{i+1}\frac{\prod_{j=0}^{2i-1}(m-j)}{n^{2i+1}}\cos(nx)x^{m-2i}.
\end{align}
It follows that $\displaystyle\{M_{k,j}^H\}_{k,j=0}^{N-1}$ is a Hankel matrix with coefficient
$(j+k)$ and  $\displaystyle\{M_{k,j}^T\}_{k,j=0}^{N-1}$  is a Toeplitz matrix with coefficient
$(j-k)$: $$\mathcal{M}_H=\begin{pmatrix}
    M_0 & M_1 & M_2 & \dots  & M_{N-1} \\
    M_1 & M_2 & \dots &   & M_N \\
    \vdots &  & & & \vdots \\
   M_{N-2} & M_{N-1} & \dots &   & M_{2N-3}\\
    M_{N-1} & \dots &  & M_{2N-3}  & M_{2N-2}
\end{pmatrix},$$
$$\mathcal{M}_T=\begin{pmatrix}
    M_0 & M_1 & \dots & M_{N-2}  & M_{N-1} \\
    M_{-1} & M_0 & M_1& \dots  & M_{N-2} \\
    \vdots &  & \ddots& & \vdots \\
   M_{2-N} & \dots & M_{-1} &M_0   & M_1\\
    M_{1-N} & M_{2-N}&  & M_{-1}  & M_0
\end{pmatrix},$$
where we have defined
\begin{align}
M_j = \frac{1}{2}\int \cos(jx)x^mdx+\frac{1}{2}\int \sin(jx)x^mdx.
\end{align}
\end{proof}
From Theorem \ref{theoremhankel} we see that $\mathcal{M}^h(x_m^*,b)$ with elements $M_{k,j}^h$ consists of a sum of a Hankel and Toeplitz matrix.
\begin{example}
We derive explicitly the Hankel and Toeplitz matrices for $m=0$ and $m=1$. We calculate the
indefinite integral
\begin{align}
M_{k,j} = \frac{2}{b-a}\int e^{ij\pi\frac{x-a}{b-a}}\cos\left(k\pi\frac{x-a}{b-a}\right)(x-\bar
x)^mdx.
\end{align}
Suppose $m=0$, in this case we have $M_{k,j} = M_{k,j}^{H}+M_{k,j}^T$, with:
\begin{align}
M_{k,j}^H &= -\frac{i\exp\left(i\frac{(j+k)\pi(x-a)}{b-a}\right)}{\pi(j+k)},\\ M_{k,j}^T &=
-\frac{i\exp\left(i\frac{(j-k)\pi(x-a)}{b-a}\right)}{\pi(j-k)},
\end{align}
where $\displaystyle\{M_{k,j}^H\}_{k,j=0}^{N-1}$ is a Hankel matrix and
$\displaystyle\{M_{k,j}^T\}_{k,j=0}^{N-1}$ is a Toeplitz matrix with
\begin{align}
 M_{j} =
 \begin{cases}
  \frac{x}{b-a},\qquad &j=0,\\
  \frac{i\exp\left(i\frac{j\pi(x-a)}{b-a}\right)}{\pi j},\quad &j\neq 0.
 \end{cases}
\end{align}
Suppose $m=1$, in this case we have:
\begin{align}
M_{k,j}^H &= -\frac{a-b}{(j-k)^2\pi^2}\exp\left(i(j-k)\pi\frac{(x-a)}{b-a}\right)-\frac{x-\bar
x}{(j-k)\pi}i\exp\left(i(j-k)\pi\frac{(x-a)}{b-a}\right),\\ M_{k,j}^T &=
-\frac{a-b}{(j+k)^2\pi^2}\exp\left(i(j+k)\pi\frac{(x-a)}{b-a}\right)-\frac{x-\bar
x}{(j+k)\pi}i\exp\left(i(j+k)\pi\frac{(x-a)}{b-a}\right),
\end{align}
where $\displaystyle\{M_{k,j}^H\}_{k,j=0}^{N-1}$ is a Hankel matrix and
$\displaystyle\{M_{k,j}^T\}_{k,j=0}^{N-1}$ is a Toeplitz matrix, with
\begin{align}
 M_{j} =
 \begin{cases}
  \frac{x(x-\bar x)}{b-a},\qquad &j=0,\\
  -\frac{a-b}{j^2\pi^2}\exp\left(ij\pi\frac{(x-a)}{b-a}\right)-\frac{x-\bar
  x}{j\pi}i\exp\left(ij\pi\frac{(x-a)}{b-a}\right),\quad &j\neq 0.
 \end{cases}
\end{align}
\end{example}
\begin{remark}
If we take $\bar x=x$, which is most common in practice, the formulas are simplified significantly and only
the case of $m=0$ is relevant. In this case the characteristic function is simply $e^{i\xi
x}$ times a sum of terms depending only on $t_m$, $t_{m+1}$ and $\xi=\frac{j\pi}{b-a}$:\begin{equation*}
  \Gh^{(n)}(t_m,x;t_{m+1},\x)= e^{i\xi x}  g_{n,0}(t_m,t_{m+1},\x).
\end{equation*}
\end{remark}
Using the split into sums of Hankel and Toeplitz matrices we can write the continuation value in matrix form as:
\begin{align}
 \boldsymbol{\hat{C}} (t_{m},x_m^*) = \sum_{h=0}^n e^{-r(t_{m+1}-t_m)}\mathrm{Re}\left((\mathcal{M}^h_H+\mathcal{M}^h_T)\boldsymbol{u}^l\right),
\end{align}
where $\mathcal{M}^h_H=\{M^{H,h}_{k,j}(x_m^*,b)\}_{k,j=0}^{N-1}$ is a Hankel matrix and
$\mathcal{M}^l_T=\{M^{T,h}_{k,j}(x_m^*,b)\}_{k,j=0}^{N-1}$ is a Toeplitz matrix and
$\boldsymbol{u}^h=\{u_j^h\}_{j=0}^{N-1}$, with
$u_j^h=g_{n,h}\left(t_m,t_{m+1},\frac{j\pi}{b-a}\right)V_j(t_{m+1})$ and
$u_0^h=\frac{1}{2}g_{n,h}\left(t_m,t_{m+1},0\right)V_0(t_{m+1})$.

\bigskip We recall that the circular convolution, denoted by $\circledast$, of two vectors is equal
to the inverse discrete Fourier transform $(\mathcal{D}^{-1})$ of the products of the forward
DFTs, $\mathcal{D}$, i.e.:
 $$\bold{x} \circledast \bold{y} = \mathcal{D}^{-1}\{\mathcal{D}(\bold{x})\cdot\mathcal{D}(\bold{y})\}.$$
For Hankel and Toeplitz matrices we have the following result:
\begin{theorem}
For a Toeplitz matrix $\mathcal{M}_T$, the product $\mathcal{M}_T \bold{u}$
is equal to the first $N$ elements of $\bold{m}_T \circledast \bold{u}_T $, where $\bold{m}_T$ and $\bold{u}_T$ are $2N$ vectors defined by
\begin{align}
&\bold{m}_T = [M_0, M_{-1},M_{-2},...,M_{1-N},0,M_{N-1},M_{N-2},...,M_1]^T,\\
&\bold{u}_T = [u_0,u_1,... ,u_{N-1},0,...,0]^T.
\end{align}
 For a Hankel matrix $\mathcal{M}_H$, the product $\mathcal{M}_H \bold{u}$ is equal to the first $N$ elements of $\bold{m_H} \circledast \bold{u_H} $ in reversed order, where $\bold{m}_H$ and $\bold{u}_H$ are $2N$ vectors defined by
 \begin{align}
 &\bold{m}_H = [M_{2N-1},M_{2N-2},... ,M_1,M_0]^T\\
 &\bold{u}_H = [0,...,0,u_0,u_1,...,u_{N-1}]^T.
 \end{align}
\end{theorem}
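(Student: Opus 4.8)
The plan is to recognize this as the classical circulant embedding of Toeplitz and Hankel matrices: a circular convolution of two length-$2N$ vectors is exactly multiplication by a $2N\times 2N$ circulant matrix, and the convolution theorem $\mathbf{x}\circledast\mathbf{y}=\mathcal{D}^{-1}\{\mathcal{D}(\mathbf{x})\cdot\mathcal{D}(\mathbf{y})\}$ recalled just above then lets that product be evaluated by three FFTs. Consequently the only thing that must be checked is that, after the prescribed zero-padding and reordering, the first $N$ entries of the length-$2N$ circular convolution reproduce the desired matrix--vector products. I would establish both claims by direct index bookkeeping, using the convention $(\mathbf{a}\circledast\mathbf{b})_k=\sum_{l}\mathbf{a}_{(k-l)\bmod 2N}\,\mathbf{b}_l$.

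For the Toeplitz case, first write $(\mathcal{M}_T\mathbf{u})_k=\sum_{j=0}^{N-1}M_{j-k}u_j$ and $(\mathbf{m}_T\circledast\mathbf{u}_T)_k=\sum_{l=0}^{2N-1}(\mathbf{u}_T)_l(\mathbf{m}_T)_{(k-l)\bmod 2N}$. Since $\mathbf{u}_T$ vanishes beyond its first $N$ entries, the sum collapses to $l=0,\dots,N-1$. Reading off $\mathbf{m}_T$ gives $(\mathbf{m}_T)_p=M_{-p}$ for $0\le p\le N-1$, the buffer value $0$ at $p=N$, and $(\mathbf{m}_T)_p=M_{2N-p}$ for $N+1\le p\le 2N-1$. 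I would then split into the case $l\le k$ (no wrap: the index $k-l\in[0,N-1]$ selects $M_{-(k-l)}=M_{l-k}$) and the case $l>k$ (one wrap: $(k-l)\bmod 2N=2N+k-l\in[N+1,2N-1]$ selects $M_{2N-(2N+k-l)}=M_{l-k}$). In both cases the selected entry is $M_{l-k}$, so $(\mathbf{m}_T\circledast\mathbf{u}_T)_k=\sum_{l}M_{l-k}u_l=(\mathcal{M}_T\mathbf{u})_k$ for $k=0,\dots,N-1$, as claimed.

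For the Hankel case the bookkeeping is identical in spirit but produces a reversal, which explains the phrase ``in reversed order'' in the statement. Starting from $(\mathcal{M}_H\mathbf{u})_k=\sum_{j}M_{k+j}u_j$, $(\mathbf{m}_H)_p=M_{2N-1-p}$, and $\mathbf{u}_H$ supported on its last $N$ entries with $(\mathbf{u}_H)_{N+j}=u_j$, I would compute $(n-N-j)\bmod 2N=N+n-j$ for $n,j\in\{0,\dots,N-1\}$ (the argument is always negative, so exactly one wrap occurs), giving $(\mathbf{m}_H\circledast\mathbf{u}_H)_n=\sum_{j}u_j M_{N-1-n+j}$. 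Setting $n=N-1-k$ yields $\sum_{j}u_j M_{k+j}=(\mathcal{M}_H\mathbf{u})_k$, which is exactly the reversed reading of the first $N$ convolution entries.

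The main, and essentially the only, obstacle is the modular index bookkeeping: one must verify that the zero-padding of $\mathbf{u}$ together with the explicit zero buffer at position $N$ of $\mathbf{m}_T$ (respectively the chosen decreasing layout of $\mathbf{m}_H$) prevents the cyclic wraparound from contaminating the first $N$ output entries, i.e.\ that the length-$2N$ circulant embedding is genuinely aliasing-free. Once the two case splits above are checked, the convolution theorem immediately delivers the FFT-based evaluation and the proof is complete.
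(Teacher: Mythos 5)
Your proof is correct. Note that the paper itself states this theorem without proof (it is quoted as a standard fact from the COS--method literature on Bermudan options), so there is no in-paper argument to compare against; your direct modular index bookkeeping is a complete, self-contained verification. Both case splits check out: for the Toeplitz part the entry selected is $M_{l-k}$ whether or not the index wraps, and the zero buffer at position $N$ of $\bold{m}_T$ is indeed never accessed since $(k-l)\bmod 2N$ ranges over $[0,N-1]\cup[N+1,2N-1]$ for $k,l\in\{0,\dots,N-1\}$; for the Hankel part the argument $n-N-j$ is always in $[-(2N-1),-1]$ so exactly one wrap occurs, and the substitution $n=N-1-k$ correctly accounts for the reversal in the statement. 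The only stylistic caveat is that the conclusion depends on the convolution convention $(\mathbf{a}\circledast\mathbf{b})_k=\sum_l \mathbf{a}_{(k-l)\bmod 2N}\mathbf{b}_l$, which you state explicitly and which is the one consistent with the DFT identity recalled in the paper, so nothing is missing.
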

Summarizing, we can calculate the continuation value $\bold{\hat C}(t_m,x_m^*)$ using the
algorithm in Figure \ref{fig2}.
\begin{figure}[h!]\label{fig2}
\caption{Algorithm \ref{sec32}: Computation of $\bold{\hat C}(t_m,x_m^*)$}
\begin{mdframed}
\begin{enumerate}
\item For $h=0,...,n$:
\begin{itemize}
\item Compute $M^h_j(x_1,x_2)$
\item Construct $\bold{m}^h_H$ and $\bold{m}^h_T$
\item Compute $\boldsymbol{u}^h(t_m)=\{u_j^h\}_{j=0}^{N-1}$
\item Construct $\bold{u}^h_T$ by padding $N$ zeros to $\boldsymbol{u}^h(t_m)$
\item $\bold{MTu}^h=$ the first $N$ elements of $\mathcal{D}^{-1}\{\mathcal{D}(\bold{m}_T^h)\cdot\mathcal{D}(\bold{u}_T^h)\}$
\item $\bold{MHu}^h=$ reverse$\{$the first $N$ elements of $\mathcal{D}^{-1}\{\mathcal{D}(\bold{m}_H^h)\cdot\bold{sgn}\cdot \mathcal{D}(\bold{u}_T^h)\}\}$
\end{itemize}
\item Compute the continuation value using $\bold{\hat C}(t_m,x_m^*)= \sum\limits_{h=0}^n e^{-r(t_{m+1}-t_m)}\mathrm{Re}(\bold{MTu}^h+\bold{MHu}^h)$.
\end{enumerate}
\end{mdframed}
\end{figure}
\\
The continuation value requires five DFTs  for each $h=0,...,n$, and a DFT is calculated using the
FFT. In practice it is most common to have $\bar x = x$ and in this case we only need five FFTs.
The computation of $F_k (t_m,x_m^*)$ is linear in $N$. The overall complexity of the  method is
dominated by the computation of $\hat C(t_m,x_m^*)$, whose complexity is $O(N \log_2 N)$ with the
FFT. The complexity of the calculation for option value at time 0 is $O(N)$. If we have a Bermudan
option with $M$ exercise dates, the overall complexity will be $O((M-1)N\log_2N)$.

\begin{remark}[{\bf American options}] The prices of American options can be obtained by
applying a Richardson extrapolation (see, for instance, \cite{american}) on the prices of a few
Bermudan options with a small number of exercise dates. Let $v_{M}$ denote the value of a Bermudan
option with maturity $T$ and a number $M$ of early exercise dates that are $ \tfrac{T}{M}$ years
apart. Then, for any $d\in \mathbb{N}$, the following 4-point Richardson extrapolation scheme
 $$
 \frac{1}{21}\left(64v_{2^{d+3}}-56v_{2^{d+2}}+14v_{2^{d+1}}-v_{2^d}\right)$$
gives an approximation of the corresponding American option price.
\end{remark}

\begin{remark}[{\bf The Greeks}]\label{remarkgreeks} The approximation method can also be used to
calculate the Greeks at almost no additional cost. In the case of $\bar x = x$, we have the
following approximation formulas for Delta and Gamma:
\begin{align}
 \hat \Delta  =&\ e^{-r(t_{1}-t_0)}\sideset{}{'}\sum_{k=0}^{N-1}\textnormal{Re}\left(e^{ik\pi\frac{x-a}{b-a}}\left(\frac{ik\pi}{b-a}g_{n,0}
 \left(t_0,t_{1},\frac{k\pi}{b-a}\right)+g_{n,1}\left(t_0,t_{1},\frac{k\pi}{b-a}\right)\right)\right)\hat V_k(t_1),\\
 \hat \Gamma  =&\ e^{-r(t_{1}-t_0)}\sideset{}{'}\sum_{k=0}^{N-1}\textnormal{Re}\bigg(e^{ik\pi\frac{x-a}{b-a}}\bigg(-\frac{ik\pi}{b-a}
 g_{n,0}\left(t_0,t_{1},\frac{k\pi}{b-a}\right)-g_{n,1}\left(t_0,t_{1},\frac{k\pi}{b-a}\right)\\
 &+2\frac{ik\pi}{b-a}g_{n,1}\left(t_0,t_{1},\frac{k\pi}{b-a}\right)
 +\left(\frac{ik\pi}{b-a}\right)^2g_{n,0}\left(t_0,t_{1},\frac{k\pi}{b-a}\right)+2g_{n,2}\left(t_0,t_{1},\frac{k\pi}{b-a}\right)\bigg)\bigg)\hat V_k(t_1).
\end{align}
\end{remark}

\section{Error estimates} \label{section3}
The error in our approximation consists of the error of the COS method and the error in the
adjoint expansion of the characteristic function. The error of the COS method depends on the
truncation of the integration range $[a,b]$ and the truncation of the infinite summation of the
Fourier-cosine expansion by $N$. The density rapidly decays to zero as $y\rightarrow\pm\infty$. Then the overall error can be bounded as follows:
 $$\epsilon_1(x;N,[a,b])\leq Q\left|\int_{\mathbb{R}\backslash[a,b]}\Gamma(t,x;T,dy)\right|+\left|\frac{P}{(N-1)^{\beta-1}}\right|,$$
where $P$ and $Q$ are constants not depending on $N$ or $[a,b]$ and $\beta\geq n\geq 1$, with $n$
being the algebraic index of convergence of the cosine series coefficients. For a sufficiently
large integration interval $[a,b]$, the overall error is dominated by the series truncation error,
which converges exponentially. The error in the backward propagation of the coefficients
$V_k(t_m)$ is defined as $\epsilon_2(k,t_m):=V_k(t_m)-\hat V_k(t_m)$. With $[a,b]$ sufficiently
large and a probability density function in $C^\infty([a,b])$, the error $\epsilon_1(k,t_m)$ converges exponentially in $N$. For a
detailed derivation on the error of the COS method see \cite{FangO08} and \cite{FangO09}.

We now present the error estimates for the adjoint expansion of the characteristic function at
orders zero and one. We consider for simplicity a model with time-independent coefficients
\begin{equation}\label{eq:modelerrorest}
 X_t = x + \int_0^t\mu(X_s)ds+\int_0^t\sigma(X_s)dW_s+\int_0^t\int_\mathbb{R}\eta(X_{s-})zd\tilde
 N(s,dz),
\end{equation}
where we have defined as usual $d\tilde N(t,dz) = dN(t,dz) - \nu(dz)dt$. This model is similar to
the model we considered initially in \eqref{eq:hetmodel}; only now we deal with slightly
simplified version and assume that the dependency on $X_t$ in the measure can be factored out,
which is often enough the case.

Let $\tilde X_t$ be the 0th-order approximation of the model in \eqref{eq:modelerrorest} {with
$\bar x = x$}, that is
\begin{align}\label{eq:Xtilda}
 \tilde X_t = x +\int_0^t\mu(x)ds+ \int_0^t\sigma(x)dW_s+\int_0^t\int_\mathbb{R}\eta(x)zd\tilde
 N(s,dz).
\end{align}
The characteristic exponent of $\tilde X_t-x$ is
\begin{equation}\label{charex}
 \psi(\xi) = i\xi\mu(x)-\frac{\sigma(x)^2}{2}\x^2-\eta(x)\int_\mathbb{R}\nu(dz)(e^z-1-z)i\xi+\eta(x)\int_\mathbb{R}\nu(dz)(e^{iz\xi}-1-iz\xi).
\end{equation}

\begin{theorem}\label{theoremzero}
Let $n=0,1$ and assume that the coefficients $\mu,\s,\y$ are continuously differentiable with
bounded derivatives up to order $n$. Let $\Gh^{(n)}(0,x;t,\x)$ in \eqref{adapprox} be the
$n$th-order approximation of the characteristic function. Then, for any $T>0$ there exists a
positive constant $C$ that depends only on $T$, on the norms of the coefficients and on the L\'evy
measure $\nu$, such that
\begin{align}\label{asa}
 \left| \Gh(0,x;t,\x)-\Gh^{(n)}(0,x;t,\x)\right|&\leq C\left(1+|\x|^{1+3n}\right)t^{n+1},\qquad t\in[0,T],\ \x\in\R.
\end{align}
\end{theorem}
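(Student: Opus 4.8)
The plan is to run a single-step parametrix (Duhamel) argument in Fourier space and to extract the two exponents in \eqref{asa} from two independent sources: the vanishing of the coefficient differences at the base point together with a sharp moment estimate, which governs the power of $t$, and the parabolic decay of the frozen symbol $e^{(t-s)\psi(\xi)}$, which governs the power of $\xi$. Throughout I take $\bar x=x$ as in \eqref{eq:Xtilda} and split $L=L_0+(L-L_0)$, where $L_0$ is the constant-coefficient operator with symbol $\psi$ from \eqref{charex}.

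First I would record the representation underlying \eqref{theGk}. Since $\hat\Gamma(0,x;t,\xi)=\Eb[e^{i\xi X_t}]$ solves $\partial_t\hat\Gamma=\psi(\xi)\hat\Gamma+\mathcal F\big((\tilde L-\tilde L_0)\Gamma(0,x;t,\cdot)\big)(\xi)$ with datum $e^{i\xi x}$ at $t=0$, the variation-of-constants formula gives
$$\hat\Gamma(0,x;t,\xi)=\hat G^{0}(0,x;t,\xi)+\int_0^t e^{(t-s)\psi(\xi)}\,\mathcal F\big((\tilde L-\tilde L_0)\Gamma(0,x;s,\cdot)\big)(\xi)\,ds.$$
For $n=0$ the remainder $R^{(0)}:=\hat\Gamma-\hat\Gamma^{(0)}$ is precisely this integral. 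The next step transfers the adjoint operator onto the test exponential: by duality between $L$ and $\tilde L$,
$$\mathcal F\big((\tilde L-\tilde L_j)\Gamma(0,x;s,\cdot)\big)(\xi)=\Eb\big[\big((L-L_j)e^{i\xi\,\cdot}\big)(X_s)\big],\qquad X_0=x.$$
Because the coefficients of $L-L_j$ are the Taylor remainders of $a,\eta,\mu$ about $\bar x=x$, they vanish to order $j+1$, and a direct computation (using $\int z^2\nu(dz)<\infty$ from \eqref{nusomm} for the jump part) yields
$$\big|(L-L_j)e^{i\xi\,\cdot}(y)\big|\le C\,(1+\xi^2)\,|y-x|^{\,j+1}.$$

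The analytic core is then the family of moment-with-oscillation bounds
$$\big|\Eb\big[(X_s-x)^{m}e^{i\xi X_s}\big]\big|\le C\,s\,(1+|\xi|^{m}),\qquad m=1,2,\quad X_0=x,$$
which I would prove by Itô's formula: expanding $e^{i\xi X_s}$, the covariation between $(X_s-x)$ and $e^{i\xi X_s}$ produces the term $i\xi\int_0^s\sigma^2(X_u)e^{i\xi X_u}du$ alongside the drift contribution, both of order $s$. It is this factor $s$ — rather than the naive $s^{m/2}$ coming from $\Eb|X_s-x|^m$ — that yields the correct time exponent; the $x$-dependent remainder coefficients $a(X_s)-a(x)=(X_s-x)g(X_s)$ with $g$ bounded are absorbed into the same computation. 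Combining the three displays gives $|\mathcal F((\tilde L-\tilde L_0)\Gamma(s))(\xi)|\le C\,s\,(1+|\xi|^{3})$. Using $\mathrm{Re}\,\psi(\xi)\le-a_0\xi^2$ with $a_0=\sigma(x)^2/2$ from \eqref{charex}, so that $|e^{(t-s)\psi(\xi)}|\le e^{-a_0(t-s)\xi^2}$, together with the elementary inequality
$$\int_0^t e^{-a_0(t-s)\xi^2}\,s\,ds\le \min\Big\{\tfrac{t^2}{2},\ \tfrac{t}{a_0\xi^2}\Big\},$$
I obtain $|R^{(0)}|\le C(1+|\xi|)t$; here the parabolic factor is exactly what trades the surplus $\xi^2$ for the advertised $|\xi|$.

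For $n=1$ the same mechanism applies after one further iteration. Subtracting $\hat G^{1}$ (given by \eqref{theGk}) and using $(\tilde L-\tilde L_0)\Gamma-(\tilde L_1-\tilde L_0)G^0=(\tilde L-\tilde L_0)(\Gamma-G^0)+(\tilde L-\tilde L_1)G^0$, one writes
$$R^{(1)}=\int_0^t e^{(t-s)\psi(\xi)}\,\mathcal F\big((\tilde L-\tilde L_0)(\Gamma-G^0)+(\tilde L-\tilde L_1)G^0\big)(\xi)\,ds.$$
The piece $(\tilde L-\tilde L_1)G^0$ is estimated directly, now with the $|y-x|^2$ bound and the $m=2$ moment of the frozen process; the piece $(\tilde L-\tilde L_0)(\Gamma-G^0)$ is handled by feeding back the order-zero estimate, the compounding raising the $\xi$-degree to a level the parabolic factor tames to $|\xi|^{4}t^{2}=|\xi|^{1+3}t^{1+1}$. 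I expect the main obstacle to lie in this last part: controlling $(\tilde L-\tilde L_0)(\Gamma-G^0)$ rigorously requires physical-space (Gaussian-type) bounds on the signed kernel $\Gamma-G^0$ rather than just its transform, and proving the sharp linear-in-$s$ moment estimates without circularity is delicate. A secondary difficulty is the reliance on non-degeneracy $\sigma(x)>0$ for the parabolic decay: when $\sigma(x)=0$ the smoothing must instead be extracted from the L\'evy symbol, which would demand additional structure on $\nu$.
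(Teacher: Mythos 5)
Your skeleton (a Duhamel identity, Taylor-remainder vanishing of the coefficient differences, moment bounds along the true process) is close in spirit to the paper's argument for $n=1$, but the two estimates your proof actually rests on are not established, and one of them imports a hypothesis the theorem does not make. First, the ``moment-with-oscillation'' bound $|\Eb[(X_s-x)^m e^{i\x X_s}]|\le Cs(1+|\x|^m)$ is not proved: the quantities you must control are of the form $\Eb[e^{i\x X_s}g(X_s)(X_s-x)^{j+1}]$ where $g$ is a Lagrange remainder (e.g.\ $a'(\e')$ with $\e'\in[x,X_s]$), hence merely bounded with no useful regularity, so the It\^o computation you sketch does not apply to it; falling back on $\|g\|_\infty\,\Eb|X_s-x|\le C\sqrt{s}$ destroys the linear-in-$s$ gain, while closing the It\^o identity by Gronwall produces a factor $e^{C\x^2 s}$. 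Second, your $\x$-bookkeeping relies on $|e^{(t-s)\psi(\x)}|\le e^{-a_0(t-s)\x^2}$ with $a_0=\s^2(x)/2>0$; the theorem assumes no non-degeneracy (its constant depends only on $T$, the coefficient norms and $\nu$), so this is not a ``secondary difficulty'' but a hole in the stated generality. Third, for $n=1$ the term $(\tilde L-\tilde L_0)(\Gamma-G^0)$ cannot be ``fed back'' from the order-zero estimate: $\tilde L-\tilde L_0$ multiplies by powers of $(y-\bar x)$, which act as $\x$-derivatives on the Fourier side, and a sup bound on $\Gh-\Gh^{(0)}$ gives no control of its derivative in $\x$ — you concede this yourself.

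The paper avoids all three problems. For $n=0$ it never passes through Duhamel: Gronwall applied to the SDEs gives the pathwise estimate $\Eb[|X_t-\tilde X_t|^2]\le C(\kappa_2t^2+\kappa_1^2t^3)$, whence $|\Gh(0,x;t,\x)-\Gh^{(0)}(0,x;t,\x)|\le|\x|\,\Eb|X_t-\tilde X_t|\le C|\x|t$ directly, with no oscillation gain and no parabolic decay. For $n=1$ it applies It\^o to $s\mapsto(\hat G^0+\hat G^1)(s,X_s;t,\x)$ along the \emph{true} process, which yields
$$\Gh(0,x;t,\x)-\Gh^{(1)}(0,x;t,\x)=\int_0^t \Eb\left[(L-L_0)\hat G^1(s,X_s;t,\x)+(L-L_1)\hat G^0(s,X_s;t,\x)\right]ds,$$
so only the \emph{explicit} kernels $\hat G^0,\hat G^1$ appear under the expectation, never the unknown $\Gamma-G^0$. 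The estimate then closes with crude ingredients: $|\hat G^0|\le1$, $|g_0^{(1)}(t-s,\x)|\le C(t-s)^2(1+|\x|^4)$, $\Eb|X_s-x|\le C\sqrt{s}$ and $\Eb|X_s-x|^2\le Cs$. The dominant term $(L-L_1)\hat G^0$ carries $(y-x)^2$, whose second moment is already linear in $s$ without any oscillation argument, and the term $(L-L_0)\hat G^1$ carries the extra factor $(t-s)^2$ from $g_0^{(1)}$, so the naive $\sqrt{s}$ moment suffices there. If you want to salvage your route, replace your Duhamel identity by this It\^o identity and drop both the oscillation lemma and the parabolic-decay step.
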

\begin{proof}
For the proof we refer to Appendix \ref{proof1}.
\end{proof}
\begin{remark}
The proof of Theorem \ref{theoremzero} can be generalized to obtain error bounds for any $n\in\N$:
however, one can see that, for $n\ge2$, the order of convergence improves only in the diffusive
part, according to the results proved in \cite{LPP4}.
\end{remark}

\section{Numerical tests} \label{section4}
For the numerical examples we use the second-order approximation of the characteristic function.
We have found this to be sufficiently accurate by numerical experiments and theoretical error
estimates. The formulas for the second-order approximation are simple, making the method easy to
implement. For the COS method, unless otherwise mentioned, we use $N=200$ and $L = 10$, where $L$
is the parameter used to define the truncation range $[a,b]$ as follows:
$$[a,b]:=\left[c_1-L\sqrt{c_2+\sqrt{c_4}},c_1+L\sqrt{c_2+\sqrt{c_4}}\right],$$ where $c_n$ is the
$n$th cumulant of log-price process $\log S$, as proposed in \cite{FangO08}. The cumulants are
calculated using the 0th-order approximation of the characteristic function. A larger $N$ and $L$
has little effect on the price, since a fast convergence is achieved already for small $N$ and
$L$. We compare the approximated values to a 95\% confidence interval computed with a
Longstaff-Schwartz method with $10^5$ simulations and $250$ time steps per year. Furthermore, in
the expansion we always use $\bar x = x$.
\subsection{Tests under CEV-Merton dynamics}
Consider a process under the CEV-Merton dynamics:
 $$dX_t = \left(r - a(x) - \lambda
\left(e^{m+\delta ^2/2}-1\right)\right)dt+\sqrt{2 a(x)}dW_t + \int_\mathbb{R}d\tilde N_t(t,dz)z,$$
with
\begin{align}
&a(x) = \frac{\sigma_0^2 e^{2(\beta-1)x}}{2},\\
&\nu(dz) = \lambda\frac{1}{\sqrt{2\pi\delta^2}}\exp\left(\frac{-(z-m)^2}{2\delta^2}\right)dz,\\
&\psi(\xi) =
-a_0(\xi^2+i\xi)+ir\xi- i\lambda \left(e^{m+\delta^2/2}-1\right)\xi +\lambda\left(e^{mi\xi-\delta^2\xi^2/2}-1\right).
\end{align}
We use the following parameters $S_0=1$, $r=5\%$, $\sigma_0=20\%$, $\beta=0.5$, $\lambda = 30\%$,
$m=-10\%$, $\delta=40\%$ and compute the European and Bermudan option values.
\begin{table}[h!]  \caption{Prices for a European and a Bermudan Put option (expiry $T=0.25$ with 3 exercise dates, expiry $T=1$ with 10 exercise dates
and expiry $T=2$ with 20 exercise dates) in the CEV-Merton model for the 2nd-order approximation
of the characteristic function, and a Monte Carlo method.}
\begin{center}
\begin{tabular}{ l|l|l|l|l|l }
\hline & &\multicolumn{2}{ |c }{European} &\multicolumn{2}{ |c }{Bermudan} \\ \hline \hline T&K &
MC 95\% c.i. & Value &MC 95\% c.i.&Value \\ \hline
0.25&0.6&0.001240-0.001433&0.001326&0.001243-0.001431&0.001307\\
&0.8&0.005218-0.005679&0.005493&0.005314-0.005774&0.005421\\
&1&0.04222-0.04321&0.04275&0.04274-0.04371&0.04304\\
&1.2&0.1923-0.1938&0.1935&0.1979-0.1989&0.1981\\ &1.4&0.3856-0.3872&0.3866&0.3948-0.3958&0.3955\\
&1.6&0.5812-0.5829&0.5825&0.5940-0.5950&0.5941\\ \hline \hline
1&0.6&0.006136-0.006573&0.006579&0.006307-0.006729&0.006096\\
&0.8&0.02526-0.02622&0.02581&0.02617-0.02711& 0.02520\\ &1&0.08225-0.08395 &
0.08250&0.08480-0.08640& 0.08593\\ &1.2 & 0.1965-0.1989&0.1977&0.2097-0.2115&0.2132\\
&1.4&0.3560-0.3589 & 0.3574 & 0.3946-0.3957& 0.3954
\\
&1.6 & 0.5341-0.5385 & 0.5364 & 0.5930-0.5941& 0.5932 \\ \hline \hline 2&
0.6&0.01444-0.01513&0.01529&0.01528-0.01594&0.01365\\
&0.8&0.04522-0.04655&0.04613&0.04596-0.04719&0.04659\\
&1&0.1046-0.1067&0.1077&0.1149-0.1168&0.1171\\ &1.2&0.2054-0.2083&0.2065&0.2319-0.2341&0.2345\\
&1.4&0.3351-0.3386&0.3382&0.3968-0.3987&0.3991\\ &1.6&0.4904-0.4944&0.4919&0.5927-0.5938&0.5935\\
\hline \hline
\end{tabular}
\label{tabmer2}
\end{center}
\end{table}\\
We present the results in Table \ref{tabmer2}. The option value for both the Bermudan options as
well as the European options appears to be accurate. Since the COS method has a very quick
convergence, already for $N=64$ the error becomes stable. For at-the-money strikes we have
$\log_{10}|\textnormal{error}|\approx 3.5$. The use of the second-order approximation of the
characteristic function is justified by the fact that the option value (and thus the error)
stabilizes starting from the second-order approximation. Furthermore, it is noteworthy that the
0th-order approximation is already very accurate.
\\\\
The computer used in the experiments has an Intel Core i7 CPU with a 2.2 GHz processor. The CPU
time of the calculations depends on the number of exercise dates. Assuming we use the second-order
approximation of the characteristic function, if we have $M$ exercise dates the CPU time will be
$5\cdot M$ ms.
\begin{remark}
The method can be extended to include time-dependent coefficients.
The accuracy and speed of the method will be of the same order as for time-independent coefficients.
\end{remark}
\begin{remark}
The Greeks can be calculated at almost no additional cost using the formulas presented in
\ref{remarkgreeks}. Numerically, the order of convergence is algebraic and is the same for both the exact
characteristic function as for the 2nd-order approximation.
\end{remark}

\subsection{Tests under the CEV-Variance-Gamma dynamics}
Consider the jump process to be a Variance-Gamma process. The VG process, is obtained by replacing
the time in a Brownian motion with drift $\theta$ and standard deviation $\varrho$, by a Gamma
process with variance $\kappa$ and unitary mean. The model parameters $\varrho$ and $\kappa$ allow
to control the skewness and the kurtosis of the distribution of stock price returns. The VG
density is characterized by a fat tail and is thus used as a model in situations where small and
large asset values are more probable than would be the case for the lognormal distribution. The
L\'evy measure in this case is given by:
 $$\nu(dx)=\frac{e^{-\lambda_1x}}{\kappa x}\caratt_{\{x>0\}}dx+\frac{e^{\lambda_2x}}{\kappa |x|}\caratt_{\{x<0\}}dx,$$
where
 $$\lambda_1 = \left(\sqrt{\frac{\theta^2\kappa^2}{4}+\frac{\varrho^2\kappa}{2}}+\frac{\theta\kappa}{2}\right)^{-1},\;\;\;\;\;\lambda_2
 =\left(\sqrt{\frac{\theta^2\kappa^2}{4}+\frac{\varrho^2\kappa}{2}}-\frac{\theta\kappa}{2}\right)^{-1}.$$
Furthermore we have
\begin{align}
&a(x) = \frac{\sigma_0^2 e^{2(\beta-1)x}}{2},\\
&\mu(t,x)= r+\frac{1}{\kappa}\log\left(1-\kappa\theta-\frac{\kappa\varrho^2}{2}\right) - a(x), \\
& \psi(\xi) = -a_0(\xi^2+i\xi)+ir\xi+i\frac{1}{\kappa}\log\left(1-\kappa\theta-\frac{\kappa\varrho^2}{2}\right)\xi
 -\frac{1}{\kappa}\log\left(1-i\kappa\theta\xi+\frac{\xi^2\kappa\varrho^2}{2}\right).
\end{align}
We use the following parameters
$S_0=1$, $r=5\%$, $\sigma_0=20\%$, $\beta=0.5$, $\kappa = 1$, $\theta = -50\%$, $\varrho = 20\%$. The results for the European and Bermudan option are presented in Table \ref{tabgamma1}.
\begin{table}[h!]
   \caption{Prices for a European and a Bermudan Put option (10 exercise dates, expiry $T=1$)
   in the CEV-VG model for the 2nd-order approximation of the characteristic function, and a Monte Carlo method.}
\begin{center}
\begin{tabular}{ l|l|l|l|l }
\hline &\multicolumn{2}{ |c }{European} &\multicolumn{2}{ |c }{Bermudan} \\ \hline \hline K & MC
95\% c.i. & Value &MC 95\% c.i.&Value \\ \hline 0.6&0.03090-0.03732&0.03546&0.03756-0.03876&0.03749\\
0.8&0.08046-0.08247&0.08029&0.08290-0.08484&0.08395\\
1&0.1507-0.1531&0.1511&0.1572-0.1600&0.1594\\
1.2&0.2501-0.2538&0.2522&0.2634-0.2668&0.2685\\
1.4&0.3831-0.3876&0.3847&0.4073-0.4108&0.4137\\
1.6&0.5430-0.5479&0.5436&0.5920-0.5938&0.5937\\ \hline \hline
\end{tabular}
\label{tabgamma1}
\end{center}
\end{table}

\subsection{CEV-like L\'evy process with a state-dependent measure and default}
In this section we consider a model similar to the one used in \cite{JacquierLorig2013}. The model is
defined with local volatility, local default and a state-dependent L\'evy measure as follows:
\begin{align}
&a(x)=\frac{1}{2}(b_0^2+\epsilon_1 b_1^2\eta(x)),\nonumber\\
&\gamma(x)=c_0+\epsilon_2 c_1\eta(x),\nonumber\\
&\nu(x,dz)=\epsilon_3\nu_N(dz)+\epsilon_4 \eta(x)\nu_N(dz),\nonumber \\
&\eta(x)=e^{\beta x}.\label{eq:modellevy}
\end{align}
We will consider Gaussian jumps, meaning that
\begin{align}
 \nu_N(dz) = \lambda\frac{1}{\sqrt{2\pi\delta^2}}\exp\left(\frac{-(z-m)^2}{2\delta^2}\right)dz.\label{eq:gaussjump}
\end{align}
\noindent
The regular CEV model has several shortcomings: the volatility for instance drops to zero as the
underlying approaches infinity; also the model does not allow the underlying to experience jumps.
This model tries to overcome these shortcomings, while still retaining CEV-like behaviour through
$\eta(x)$. The local volatility function $\sigma(x)$ behaves asymptotically like the CEV model,
$\sigma(x)\sim \sqrt{\epsilon_1}b_1e^{\beta x/2}$ as $x\rightarrow -\infty$, reflecting the fact
that the volatility tends to increase as the asset price drops (the leverage effect). Jumps of
size $dz$ arrive with a state-dependent intensity of $\nu(x,dz)$. Lastly, a default arrives with
intensity $\gamma(x)$. The default function $\gamma(x)$ behaves asymptotically like $\epsilon_2 c_1
e^{\beta x}$ as $x\rightarrow -\infty$, reflecting the fact that a default is more likely to occur
when the price goes down.
\\
In Table \ref{table3} the results are presented for a model as defined in \eqref{eq:modellevy}
without default, meaning that $c_0=c_1=0$ and with a state-dependent jump measure, so $\nu(x,dz)=\eta(x)\nu_N(dz)$. In this case we have
$$\psi(\xi)=ir\xi-a_0(\xi^2-i\xi)-\lambda\nu_0(e^{m+\delta^2/2}-1)i\xi+\lambda\nu_0(e^{mi\xi-\delta^2\xi^2/2}-1),$$
where $a_0=\frac{1}{2}b_1^2e^{\beta \bar x}$ and $\nu_0(dz)=e^{\beta \bar x}\nu_N(dz)$. The other parameters are chosen as: $b_1 = 0.15$, $b_0=0$, $\beta = -2$, $\lambda=20\%$,
$\delta=20\%$, $m=-0.2$, $S_0=1$, $r=5\%$, $\epsilon_1 = 1$, $\epsilon_3=0$, $\epsilon_4=1$, the number of exercise dates is 10 and
$T=1$.
\begin{table}[h!]
\caption{Prices for a European and a Bermudan Put option (10 exercise dates, expiry $T=1$) in the
CEV-like model with state-dependent measure for the 2nd-order approximation characteristic
function, and a Monte Carlo method.}
\begin{center}
\begin{tabular}{ l|l|l|l|l }
\hline &\multicolumn{2}{ |c }{European} &\multicolumn{2}{ |c }{Bermudan} \\ \hline \hline K & MC
95\% c.i. & Value &MC 95\% c.i.&Value \\ \hline 0.8&0.01025-0.01086&0.009385&0.01068-0.01125&0.01024\\
 1&0.04625-0.04745&0.04817& 0.05141-0.05253&0.05488\\
1.2&0.1563-0.1582&0.1564&0.1942-0.1952&0.1952\\
1.4&0.3313-0.3334&0.3314&0.3927-0.3934&0.3930\\
1.6&0.5207-0.5229&0.5218&0.5919-0.5926&0.5920\\
1.8&0.7103-0.7124&0.7122&0.7906-0.7913&0.7910\\ \hline\hline
\end{tabular}
\end{center}\label{table3}
\end{table}
From the results for both the European option and the Bermudan option we see that the method
performs very accurately, even for deeply in-the-money strikes.
\\\\
In Table \ref{table4} the results are presented for the value of a defaultable Put option. In case
of default prior to exercise the Put option payoff is 0, in case of no default the value is
$(K-S_t)^+$, depending on the exercise time. We look at the model as defined in
\eqref{eq:modellevy} with the possibility of default and consider state-independent jumps, meaning that we have $\gamma(x) = \eta(x)$ and $\nu(x,dz)=\nu_N(dz)$. We have $$\psi(\xi)=ir\xi-a_0(\xi^2-i\xi)+\gamma_0i\xi-\gamma_0-\lambda(e^{m+\delta^2/2}-1)i\xi+\lambda(e^{mi\xi-\delta^2\xi^2/2}-1),$$ where
$a_0=\frac{1}{2}b_1^2e^{\beta \bar x}$ and $\gamma_0=c_1e^{\beta \bar x}$. The other parameters are $b_0=0$, $b_1 = 0.15$, $\beta = -2$,
$c_0=0$, $c_1=0.1$, $S_0=1$, $r=5\%$, $\epsilon_1 = 1$, $\epsilon_2=1$, $\epsilon_3=1$, $\epsilon_4=0$, the number of exercise dates is 10 and
$T=1$.
\begin{table}[h!]
\caption{Prices for a European and a Bermudan Put option (10 exercise dates, expiry $T=1$) in the
CEV-like model with default for the 2nd-order approximation characteristic function, and a Monte
Carlo method.}
\begin{center}
\begin{tabular}{ l|l|l|l|l }
\hline &\multicolumn{2}{ |c }{European} &\multicolumn{2}{ |c }{Bermudan} \\ \hline \hline K & MC
95\% c.i. & Value &MC 95\% c.i.&Value \\ \hline
0.8&0.002905-0.003175&0.003061&0.005876-0.006245&0.006361\\
1&0.01845-0.01918&0.01893&0.03419-0.03506&0.03520\\
1.2&0.08148-0.08296&0.08297&0.1820-0.1827&0.1824\\
1.4&0.2184-0.2205&0.2173&0.3793-0.3801&0.3792\\
1.6&0.3867-0.3892&0.3841&0.5752-0.5763&0.5763\\
1.8&0.5597-0.5638&0.5556&0.7727-0.7739&0.7733\\
\hline\hline
\end{tabular}
\end{center}\label{table4}
\end{table}

\newpage
\appendix
\section{Proof of Theorem \ref{theoremzero}}\label{proof1}
Let $X$ and $\tilde X$ be as in \eqref{eq:modelerrorest} and \eqref{eq:Xtilda} respectively. We
first  prove that
\begin{equation}\label{esti1}
  E[|X_t-\tilde X_t|^{2}]\le C\left(\kappa_2 t^2+\kappa_1^2 t^3\right),\qquad t\in[0,T],
\end{equation}
for some positive constant $C$ that depends only on $T$, on the Lipschitz constants of the
coefficients $\mu$, $\sigma$, $\eta$ and on the L\'evy measure $\nu$. Here $\kappa_1=-\psi'(0)$
and $\kappa_2=-\psi''(0)$ where $\psi$ in \eqref{charex} is the characteristic exponent of the
L\'evy process $(\tilde X_t-x)$.

Using the H\"older inequality, the It\^o isometry (see, for instance, \cite{Pascucci2011}) 
and the Lipschitz continuity 
of $\eta$, $\mu$ and $\sigma$, the mean squared error is bounded by:
\begin{align}
 E\left[|X_t-\tilde X_t|^2\right] \leq&\ 3E\left[\left(\int_0^t(\mu(X_s)-\mu(x))ds\right)^2\right]+3E\left[\left(\int_0^t(\sigma(X_s)-\sigma(x))dW_s\right)^2\right]\\
 &+3E\left[\left(\int_0^t\int_\mathbb{R}(\eta(X_{s-})-\eta(x))zd\tilde N(s,dz)\right)^2\right]\\
 \label{cum23}
 \leq&\ C\int_0^tE\left[|\tilde X_s-x|^2\right]ds+C\int_0^tE\left[|X_s-\tilde X_s|^2\right]ds,
\end{align}
where
\begin{align}
 C=6\left(\left\|\mu'\right\|_\infty^2+\left\|\sigma'\right\|_\infty^2+\left\|\eta'\right\|_\infty^2\int_\mathbb{R}z^2\nu(dz)\right).
\end{align}
Now we recall the following relationship between the first and second moment and cumulants
\begin{align}
 E[(\tilde X_s-x)]=c_1(s),\qquad
 E[(\tilde X_s-x)^2]=c_2(s)+c_1(s)^2,
\end{align}
where
  $$c_n(s)=\frac{s}{i^n}\frac{\partial^n \psi(\xi)}{\partial\xi^n}\bigg|_{\xi=0},$$
and $\psi(\xi)$ is the characteristic exponent of $(\tilde X_s-x)$. Thus we have
\begin{align}\label{cum12}
 E\left[|\tilde X_{s}-x|^2\right]= 
 \kappa_2s+\kappa_1^2s^2.
\end{align}
Plugging \eqref{cum12} into \eqref{cum23} we get
  $$E[|X_t-\tilde X_t|^2]\le C\left(\frac{\kappa_2}{2}t^2+\frac{\kappa_1^2}{3}t^3\right)+C \int_0^tE\left[|X_{s}-\tilde X_{s}|^2\right]ds,$$
and therefore estimate \eqref{esti1} follows by applying the Gronwall inequality in the form
\begin{align}
 \phi(t)\leq \alpha(t)+C\int_0^t\phi(s)ds\ \implies\ \phi(t)\leq
 \alpha(t)+C\int_0^t\a(s)e^{C(t-s)}ds,
\end{align}
that is valid for any $C\ge0$ and $\phi$, $\a$ continuous functions.

From \eqref{esti1} and \eqref{cum12} we can also deduce that
\begin{equation}\label{eses}
 E\left[\left|X_{t}-x\right|^{2}\right]\le
 2E\left[\big|X_{t}-\tilde{X}_{t}\big|^{2}\right]+2E\left[\big|\tilde{X}_{t}-x\big|^{2}\right]\leq
 C\left(\kappa_2t+\kappa_1^2t^2\right),\qquad t\in[0,T].
\end{equation}
Moreover, from \eqref{esti1} we also get the following error estimate for the expectation of a
Lipschitz payoff function $v$:
\begin{equation}\label{eses1}
 \left|E\left[v(X_{t})\right]-E[v(\tilde{X}_{t})]\right|\le C\sqrt{\kappa_2t+\kappa_1^2t^2},\qquad t\in[0,T],
\end{equation}
where now $C$ also depends on the Lipschitz constant of $v$. In particular, taking
$v(x)=e^{ix\x}$, this proves \eqref{asa} for $n=0$.

\medskip Next we prove \eqref{asa} for $n=1$.

Proceeding as in the proof of Lemma 6.23 in \cite{LPP4} with $u(0,x)=\Gh(0,x;t,\x)$ and $\xbar=x$,
we find
\begin{align}
 \Gh(0,x;t,\x)-\Gh^{(1)}(0,x;t,\x)&=\int_{0}^{t}E\left[(L-L_0)\hat G^1(s,X_s;t,\x)+(L-L_1)\hat G^0(s,X_s;t,\x)\right]ds,
\end{align}
where the 1st-order approximation is as usual
\begin{align}
\Gh^{(1)}(s,X;t,\x)=\hat G^{0}(s,X;t,\x)+\hat G^{1}(s,X;t,\x),
\end{align}
with
\begin{align}
 &\hat G^{0}(s,X;t,\x)= e^{i X \x+(t-s)\psi(\x)},\\ 
 &\hat G^{1}(s,X;t,\x)= e^{i X \x+(t-s)\psi(\x)}g_0^{(1)}(t-s,\x),
\end{align}
and $g_0^{(1)}$ as in \eqref{g01}.
Using the Lagrangian remainder of the Taylor expansion, we have
\begin{align}
 L-L_0&=\gamma'(\e')(X-x)(\partial_X-1)+a'(\e')(X-x)(\partial_{XX}-\partial_{X})+\eta'(\e')(X-x)\int_\mathbb{R}\nu(dz)(e^z-1-z)\partial_{X}\\
 &+\eta'(\e')(X-x)\int_\mathbb{R}\nu(dz)(e^{z\partial_{X}}-1-z\partial_{X}),\\
 L-L_1&=\frac{1}{2}\gamma''(\e'')(X-x)^2(\partial_X-1)+\frac{1}{2}a''(\e'')(X-x)^2(\partial_{XX}-\partial_{X})\\
 &+\frac{1}{2}\eta''(\e'')(X-x)^2\int_\mathbb{R}\nu(dz)(e^z-1-z)\partial_{X}+\frac{1}{2}\eta''(\e'')(X-x)^2\int_\mathbb{R}\nu(dz)(e^{z\partial_{X}}-1-z\partial_{X}),
\end{align}
for some $\e',\e'' \in[x,X]$. Now, $|\hat G^{0}|\le 1$ because $\hat G^{0}$ is the characteristic
function of the process $\tilde{X}$ in \eqref{eq:Xtilda}; thus, we have
\begin{equation}\label{sss}
  \left|(L-L_1)\hat G^0(s,X_s;t,\x)\right|\le C(1+|\x|^{2}) \left|X_s-x\right|^{2}.
\end{equation}
On the other hand, from \eqref{g01} we have
  $$\left|g_0^{(1)}(t-s,\x)\right|\le C(t-s)^{2}\left(1+|\x|^{4}\right),$$
and therefore we get
\begin{align}
 \left|(L-L_0)\hat G^1(s,X_s;t,\x)\right|\le C(t-s)^{2}(1+|\x|^{4})\left|X_s-x\right|.
\end{align}
So we find
\begin{align}
 \left| \Gh(0,x;t,\x)-\Gh^{(1)}(0,x;t,\x)\right|&\le C(1+|\x|^{4})\int_0^t\left( (t-s)^{2}E\left[\left|X_s-x\right|\right]+E\left[\left|X_s-x\right|^{2}\right]\right)ds
\end{align}
\noindent
The thesis then follows from estimate \eqref{eses} and integrating.
\endproof

\section{2nd-order approximation of the characteristic function}\label{app2}
For completeness we present here the formulas of the characteristic function approximation in the
general case up to the 2nd-order approximation for a process as in \eqref{eq:hetmodel} with a
local-volatility coefficient $a(t,x)$, a local default intensity $\gamma(t,x)$ and a
state-dependent measure $\nu(t,x,dz)$. We expand the coefficients around $\bar x=x$. This choice
of $\bar x$ is most common in practice and it simplifies the formulas significantly. We have
\begin{align}
\hat G^{(0)}(t,x;T,\xi)&=e^{i\xi x+(T-t)\psi(\xi)}\\
\hat G^{(1)}(t,x;T,\xi) &=\hat G^{(0)}(t,x;T,\xi)\bigg(
\frac{1}{2}i(T-t)^2\xi(i+\xi)\alpha_1\psi'(\xi)+\frac{1}{2}(T-t)^2(i+\xi)\gamma_1\psi'(\xi)\\
&-\frac{1}{2}\int_\mathbb{R}\nu_1(dz)z(T-t)^2\xi\psi'(\xi)-\frac{1}{2}\int_\mathbb{R}\nu_1(dz)(e^z-1-z)\xi\psi'(\xi)\\
&-\frac{1}{2}\int_\mathbb{R}i(e^{iz\xi}-1)(T-t)^2\psi'(\xi)\bigg)\\
\hat G^{(2)}(t,x;T,\xi)&=\hat G^{(0)}(t,x;T,\xi)\big(G^{(2)}_1(t,x;T,\xi)+G^{(2)}_2(t,x;T,\xi)+G^{(2)}_3(t,x;T,\xi)\\
&+G^{(2)}_4(t,x;T,\xi)+G^{(2)}_5(t,x;T,\xi)\big),
\end{align}
where we have defined:
\begin{align}
G^{(2)}_1(t,x;T,\xi)&=\frac{1}{2}((T-t)^2 a_2\xi(i+\xi )\psi
''(\xi)-\frac{1}{8}(T-t)^4a_1^2\xi^2(i+\xi)^2\psi'(\xi)^2\\ &-\frac{1}{6}(T-t)^3\xi
(i+\xi)(a_1^2(i+2\xi)\psi'(\xi)-2a_2\psi'(\xi)^2+a_1^2\xi(i+\xi)\psi''(\xi)),\\
G^{(2)}_2(t,x;T,\xi)&=\frac{1}{8}(T-t)^2(i+\xi)^2\gamma_1^2\psi'(\xi)^2+\frac{1}{2}(T-t)^2(1-i\xi)\gamma_2\psi''(\xi)\\
&+\frac{1}{6}(T-t)^3(i+\xi)(\gamma_1^2\psi'(\xi)-2i\gamma_2\psi'(\xi)^2+(i+\xi)\gamma_1^2\psi''(\xi)),\\
G^{(2)}_3(t,x;T,\xi)&=\frac{1}{6}(T-t)^3\xi\psi'(\xi)\int_{\mathbb{R}^2}z\nu_1(dz)+\frac{1}{3}i(T-t)^3\xi\psi'(\xi)^2\int_{\mathbb{R}}z\nu_1(dz)\\
&+\frac{1}{8}(T-t)^4\xi^2\psi'(\xi)^2\int_{\mathbb{R}^2}z\nu_1(dz)+\frac{1}{2}i\xi(T-t)^2\psi''(\xi)\int_{\mathbb{R}}z\nu_1(dz)\\
&+\frac{1}{6}(T-t)^3\xi^2\psi''(\xi)\int_{\mathbb{R}^2}z\nu_1(dz),\\
G^{(2)}_4(t,x;T,\xi)&=-\frac{1}{6}i(T-t)^3\psi'(\xi)\int_\mathbb{R}(e^{iz\xi}-1)\nu_1(dz)\int_\mathbb{R}ze^{iz\xi}\nu_1(dz)\\
&-\frac{1}{8}(T-t)^4\psi'(\xi)^2\int_{\mathbb{R}^2}(e^{iz\xi}-1)\nu_1(dz)-\frac{1}{3}(T-t)^3\psi'(\xi)\int_{\mathbb{R}}(e^{iz\xi}-1)\nu_2(dz)\\
&-\frac{1}{6}(T-t)^3\psi''(\xi)\int_{\mathbb{R}^2}(e^{iz\xi}-1)\nu_1(dz)-\frac{1}{2}(T-t)^2\psi''(\xi)\int_{\mathbb{R}}(e^{iz\xi}-1)\nu_2(dz),\\
G^{(2)}_5(t,x;T,\xi)&=\frac{1}{6}(T-t)^3\xi\psi'(\xi)\int_{\mathbb{R}^2}(e^z-1-z)\nu_1(dz)+\frac{1}{8}(T-t)^4\xi^2\psi'(\xi)^2\int_{\mathbb{R}^2}(e^z-1-z)\nu_1(dz)\\
&+\frac{1}{3}i(T-t)^3\xi\psi'(\xi)\int_{\mathbb{R}}(e^z-1-z)\nu_2(dz)+\frac{1}{6}(T-t)^3\xi^2\psi''(\xi)\int_{\mathbb{R}^2}(e^z-1-z)\nu_1(dz)\\
&+\frac{1}{2}i(T-t)^2\xi\psi''(\xi)\int_{\mathbb{R}}(e^z-1-z)\nu_2(dz).
\end{align}
Essentially $G^{(2)}_1$ corresponds to the Taylor expansion of the local volatility, $G^{(2)}_2$
results from the default function, $G^{(2)}_3$, $G^{(2)}_4$ and $G^{(2)}_5$ are related to the
state-dependent measure.


%
%

\bibliographystyle{siam}
\bibliography{Biblio}

\end{document}